\newtheorem{lemma}{Lemma}
\newtheorem{theorem}{Theorem}
\newtheorem{proposition}{Proposition}
\newtheorem{corollary}{Corollary}
\newcommand{\continuous}{\textsc{Continuous}\xspace}
\newcommand{\discrete}{\textsc{Discrete}\xspace}
\newcommand{\VDD}{\textsc{Vdd-Hopping}\xspace}
\newcommand{\incremental}{\textsc{Incremental}\xspace}
\newcommand{\smin}{s_{\mathit{min}}} 
\newcommand{\smax}{s_{\mathit{max}}} 
\newcommand\II{\ensuremath{\mathcal{I}}\xspace}
\newcommand\MinE{\ensuremath{\textsc{Min\-En\-er\-gy}(G,D)}\xspace}
\begin{document}

\title{Reclaiming the energy of a schedule: \\ models and algorithms
}

\author{Guillaume Aupy, Anne Benoit,  Fanny Dufoss\'e and Yves Robert\\
LIP, \'Ecole Normale Supérieure de Lyon, France\\[.1cm]
\{Guillaume.Aupy$|$Anne.Benoit$|$Fanny.Dufosse$|$Yves.Robert\}@ens-lyon.fr
}

\date{April 2012}

\maketitle

\footnotetext{
A two-page extended abstract of this work appeared as a short presentation in SPAA'2011, 
while the long version has been accepted for publication in ``Concurrency and Computation: Practice and Experience''.}

\begin{abstract}
We consider a task graph to be executed on a set of processors. We
assume that the mapping is given, say by an ordered list of tasks to
execute on each processor, and we aim at optimizing the energy
consumption while enforcing a prescribed bound on the execution time.
While it is not possible to change the allocation of a task, it is
possible to change its speed. Rather than using a local approach such
as backfilling, we consider the problem as a whole and
study the impact of several speed variation models on its
complexity. For continuous speeds, we give a closed-form formula for
trees and series-parallel graphs, and we cast the problem
into a geometric programming problem for general directed acyclic
graphs. We show that the classical dynamic voltage and frequency
scaling (DVFS) model with discrete modes leads to a NP-complete
problem, even if the modes are regularly distributed (an important
particular case in practice, which we analyze as the incremental
model). On the contrary, the VDD-hopping model leads to a polynomial
solution. Finally, we provide an approximation algorithm for the
incremental model, which we extend for the general DVFS model.
\end{abstract}

\section{Introduction}
\label{sec.intro}


The \emph{energy consumption} of computational platforms has recently
become a critical problem, both for economic and environmental
reasons~\cite{greenMills}.  As an example, the Earth Simulator
requires about 12~MW (Mega Watts) of peak power, and PetaFlop systems
may require 100~MW of power, nearly the output of a small power plant
(300~MW).  At \$100 per MW.Hour, peak operation of a PetaFlop machine
may thus cost \$10,000 per hour~\cite{1105799}.  Current estimates
state that cooling costs \$1 to \$3 per watt of heat
dissipated~\cite{980157}. This is just one of the many economical
reasons why energy-aware scheduling has proved to be an important
issue in the past decade, even without considering battery-powered
systems such as laptops and embedded systems.  As an example, the Green500 list (\url{www.green500.org})
provides rankings of the most energy-efficient supercomputers in the
world, therefore raising even more awareness about power consumption.

To help reduce energy dissipation, processors can run at different
speeds.  Their power consumption is the sum of a static part (the cost
for a processor to be turned on) and a dynamic part, which is a
strictly convex function of the processor speed, so that the execution
of a given amount of work costs more power if a processor runs in a
higher mode~\cite{10.1109/IPDPS.2006.1639597}. More precisely, a
processor running at speed $s$ dissipates $s^3$
watts~\cite{280894,pruhsTCS,pow3,pow3IPDPS,pow3ICPP} per time-unit,
hence consumes $s^3 \times d$ joules when operated during $d$ units of
time.  Faster speeds allow for a faster execution, but they also lead
to a much higher (supra-linear) power consumption.

Energy-aware scheduling aims at minimizing the energy
consumed during the execution of the target application. Obviously,
it makes sense only if it is coupled with some performance bound
to achieve, otherwise, the optimal solution always is to run each processor at
the slowest possible speed.

In this paper, we investigate energy-aware scheduling strategies for
executing a task graph on a set of processors. The main originality is
that we assume that the mapping of the task graph is given, say by an
ordered list of tasks to execute on each processor. There are
many situations in which this problem is important, such as optimizing
for legacy applications, or accounting for
affinities between tasks and resources, or even when tasks are pre-allocated~\cite{Rayward95}, for example for security
reasons.
In such situations, assume that a list-schedule has been computed for
the task graph, and that its execution time should not exceed a
deadline~$D$.
We do not have the freedom to change the assignment of a given task,
but we can change its speed to reduce energy consumption, provided
that the deadline~$D$ is not exceeded after the speed change.
Rather than using a local approach such as
backfilling~\cite{Wang2010,Prathipati2004}, which only reclaims gaps
in the schedule, we consider the problem as a whole, and we assess the
impact of several speed variation models on its complexity.
More precisely, we investigate the following models:
\begin{description} 
\item[\continuous model.] Processors can have arbitrary speeds, and
  can vary them continuously:
this model is unrealistic (any possible value of the speed, say
$\sqrt{e^{^\pi}}$, cannot be obtained) but it is theoretically
appealing~\cite{BKP07}. A maximum speed,~$\smax$, cannot be exceeded.
\item[\discrete model.] Processors have a discrete number
of predefined speeds (or frequencies), which correspond to different
voltages that the processor can be subjected
to~\cite{Okuma2001}.
Switching frequencies is not allowed during the execution of a given
task, but two different tasks scheduled on a same processor can be
executed at different frequencies.
\item[\VDD model.] This model is similar to the \discrete one, except
  that switching modes during the execution of a given task is
  allowed: any rational speed can be simulated,
by simply switching, at the appropriate time during the execution of a
task, between two consecutive modes~\cite{Miermont2007}.
\item[\incremental model.]  In this variant of the \discrete model, we
  introduce a value $\delta$ that corresponds the minimum permissible
  speed increment,
induced by the minimum voltage increment that
can be achieved when controlling the processor CPU. This new model
aims at capturing a realistic version of the \discrete model, where
the different modes are spread regularly instead of arbitrarily
chosen.
\end{description} 

Our main contributions are the following.
For the \continuous model, we give a closed-form formula for trees and
series-parallel graphs, and we cast the problem
into a geometric programming problem~\cite{Boyd2004} for
general DAGs.
For the \VDD model, we show that the optimal solution for general DAGs
can be computed in polynomial time, using a (rational) linear
program. Finally,
for the \discrete and \incremental models, we show that the problem is
NP-complete.
Furthermore, we provide approximation algorithms which rely on the
polynomial algorithm for the \VDD model,
and we compare their solution with the optimal
\continuous solution.

The paper is organized as follows.
We start with a survey of related literature in
Section~\ref{sec.related}. We then provide the formal description of
the framework and of the energy models in Section~\ref{sec.framework},
together with a simple example to illustrate the different models.
The next two sections constitute the heart of the paper: in
Section~\ref{sec.continuous}, we provide analytical formulas for
continuous speeds, and the formulation into the convex optimization
problem. In Section~\ref{sec.alldiscrete}, we assess the
complexity of the problem with all the discrete models: \discrete,
\VDD and \incremental, and we discuss approximation algorithms.
Finally we conclude in Section~\ref{sec.conclusion}.

\section{Related work}
\label{sec.related}

\smallskip
Reducing the energy consumption of computational platforms is an
important research topic, and many techniques at the process, circuit
design, and micro-architectural levels have been
proposed~\cite{Lee2000,Lahiri2002,Grosse2009}.
The dynamic voltage and frequency scaling (DVFS) technique has been
extensively studied, since it may lead to efficient energy/performance
trade-offs~\cite{JPG04,1105799,BKP07,ChenKuo07,Buyya2007,Yang2009,Wang2010}.
Current microprocessors (for instance, from AMD~\cite{amd} and
Intel~\cite{intel}) allow the speed to be set
dynamically. Indeed, by lowering supply voltage, hence processor
clock frequency, it is possible to achieve important reductions in
power consumption, without necessarily increasing the execution time.
We first discuss different optimization problems that arise in this
context. Then we review energy models.

\subsection{DVFS and optimization problems}

When dealing with energy consumption,
the most usual optimization function consists in
minimizing the energy consumption, while ensuring a deadline on the
execution time (i.e., a real-time constraint), as discussed in the
following papers.

In~\cite{Okuma2001}, Okuma et al. demonstrate that voltage scaling is
far more effective than the shutdown approach, which simply stops the
power supply when the system is inactive. Their target processor
employs just a few discretely variable voltages.
De Langen and Juurlink~\cite{juurLan09} discuss leakage-aware
scheduling heuristics which investigate both DVS and processor
shutdown, since static power consumption due to leakage current is
expected to increase significantly.
Chen et al.~\cite{Raghavan2005} consider parallel sparse applications,
and they show that when scheduling applications modeled by
a directed acyclic graph with a well-identified critical path, it is
possible to lower the voltage during non-critical execution of
tasks, with no impact on the execution time.
Similarly, Wang et al.~\cite{Wang2010} study the slack time for
non-critical jobs, they extend their execution time and thus reduce
the energy consumption without increasing the total execution time.
Kim et al.~\cite{Buyya2007} provide power-aware scheduling algorithms
for bag-of-tasks applications with deadline constraints, based on
dynamic voltage scaling. Their goal is to minimize power consumption as
well as to meet the deadlines specified by application users.

For real-time embedded systems, slack reclamation techniques are used.
Lee and Sakurai~\cite{Lee2000} show how to exploit slack time arising
from workload variation, thanks to a software feedback control of
supply voltage. Prathipati~\cite{Prathipati2004} discusses
techniques to take advantage of run-time variations in the execution
time of tasks; it determines the minimum voltage under which each task
can be executed, while guaranteeing the deadlines of each task. Then,
experiments are conducted on the Intel StrongArm SA-1100 processor, which has
eleven different frequencies, and the Intel PXA250 XScale embedded
processor with four frequencies.
In~\cite{XuBig}, the goal of Xu et al. is to schedule a set of
independent tasks,
given a worst case execution cycle (WCEC) for each task, and a global
deadline, while accounting for time and energy penalties when the
processor frequency is changing. The frequency of the processor can be
lowered when some slack is obtained dynamically, typically when a task
runs faster than its WCEC.
Yang and Lin~\cite{Yang2009} discuss algorithms with preemption,
using DVS techniques; substantial energy can be saved using these
algorithms, which succeed to claim the static and dynamic slack time,
with little overhead.

Since an increasing number of systems are powered by batteries,
maximizing battery life also is an important optimization problem.
Battery-efficient systems can be obtained with similar techniques of
dynamic voltage and frequency scaling, as described by Lahiri et
al. in~\cite{Lahiri2002}.
Another optimization criterion is the energy-delay product,
since it accounts for a trade-off between performance and energy
consumption, as for instance discussed by Gonzalez and Horowitz
in~\cite{Gonzalez1996}.
We do not discuss further these latter optimization problems, since our
goal is to minimize the energy consumption, with a fixed deadline.

In this paper, the application is a task graph
(directed acyclic graph), and we assume that the mapping, i.e., an
ordered list of tasks to execute on each processor, is given. Hence,
our problem is closely related to slack reclamation techniques, but
instead on focusing on non-critical tasks as for instance
in~\cite{Wang2010}, we consider the problem as a whole.
Our contribution is to perform an exhaustive complexity study for
different energy models. In the next paragraph, we discuss related work
on each energy model.


\subsection{Energy models}

Several energy models are considered in the literature, and they can
all be categorized in one of the four models investigated in this
paper, i.e.,  \continuous, \discrete, \VDD or \incremental.

The \continuous model is used mainly for theoretical studies. For
instance,  Yao et al.~\cite{YDS95}, followed by Bansal et
al.~\cite{BKP07}, aim at scheduling a collection
of tasks (with release time, deadline and amount of work), and the
solution is the time at which each task is scheduled, but also, the
speed at which the task is executed. In these papers, the speed can
take any value, hence following the \continuous model.

We believe that the most widely used model is the \discrete
one. Indeed, processors
have currently only a few discrete number of possible
frequencies~\cite{amd,intel,Okuma2001,Prathipati2004}. Therefore, most
of the papers discussed above follow this model. Some studies
exploit the continuous model to determine the smallest frequency
required to run a task, and then choose the closest upper discrete
value, as for instance~\cite{Prathipati2004} and~\cite{ZHC02}.

Recently, a new local dynamic voltage scaling architecture has been
developed, based on
the \VDD model~\cite{Miermont2007,Beigne2008b,Beigne2008}.
It was shown in~\cite{Lee2000} that significant power can be saved by
using two distinct voltages, and architectures using this principle
have been developed (see for instance~\cite{Kawaguchi2001}).
Compared to traditional power converters, a new design with no needs
for large passives or costly technological options has been validated
in a STMicroelectronics CMOS 65nm
low-power technology~\cite{Miermont2007}.

To the best of our knowledge, this paper introduces the \incremental
model for the first time. The main
rationale 
is that future technologies may well have an increased number of
possible frequencies, and these will follow a regular pattern. For
instance, note that the SA-1100 processor, considered
in~\cite{Prathipati2004}, has eleven frequencies which are
equidistant, i.e., they follow the \incremental model.  Lee and
Sakurai~\cite{Lee2000} exploit discrete levels of clock frequency as
$f$, $f/2$, $f/3$, ..., where $f$ is the master (i.e., the higher)
system clock frequency. This model is closer to the \discrete model,
although it exhibits a regular pattern similarly to the \incremental
model.

Our work is the first attempt to compare these different models: on
the one hand, we assess the impact of the model on the problem
complexity (polynomial vs NP-hard), and on the other hand, we provide
approximation algorithms building upon these results.
The closest work to ours is the paper by Zhang et al.~\cite{ZHC02}, in
which the authors also consider the mapping of directed acyclic
graphs, and compare the \discrete and the \continuous models.
We go beyond their work in this paper, with an exhaustive complexity
study, closed-form formulas for the continuous model,
and the comparison with the \VDD and \incremental models.


\section{Framework}
\label{sec.framework}

First we detail the optimization problem in Section~\ref{sec.pb}.
Then we describe the four energy models in Section~\ref{sec.mod}.
Finally, we illustrate the models and motivate the problem with an
example in Section~\ref{sec.ex}.


\subsection{Optimization problem}
\label{sec.pb}

Consider an application task graph $\mathcal{G}=(V,\mathcal{E})$,
with $n=|V|$ tasks
denoted as $V= \{T_1, T_2, \dots, T_n\}$,
and where the set $\mathcal{E}$ denotes the precedence edges between
tasks. Task~$T_i$ has a cost~$w_i$ for $1 \leq i \leq n$.
We assume that the tasks in $\mathcal{G}$ have been allocated onto a parallel
platform made up of identical processors.
We define the \emph{execution graph} generated by this allocation as
the graph $G=(V,E)$, with the following augmented set of edges:
\begin{compactitem}
\item $\mathcal{E} \subseteq E$: if an edge exists in the precedence graph, it
  also exists in the execution graph;
\item if $T_1$ and $T_2$ are executed successively, in this order, on the same
processor, then $(T_1,T_2)\in~\!E$.
\end{compactitem}

The goal is to the minimize the energy consumed during the execution
while enforcing a deadline~$D$ on the execution time.
We formalize the optimization problem in the simpler case where each
task is executed at constant speed. This strategy is optimal for the
\continuous model (by a convexity argument) and for the \discrete and
\incremental models (by definition).
For the \VDD model, 
we reformulate the problem in Section~\ref{sec.vdd}.
Let $d_i$ be the duration of the execution of task $T_i$, $t_i$ its
completion time, and $s_i$ the speed at which it is executed.
We obtain the following formulation of the \MinE problem,
given an execution graph $G= (V,E)$ and a deadline~$D$;
the $s_i$~values are variables, whose values are constrained by the
energy model (see Section~\ref{sec.mod}).
\begin{equation}
\label{eq.opt}
\begin{array}{lrl}
\text{~~~Minimize}& & \sum_{i=1}^{n}  s_i^3 \times d_i\\
\text{~~~subject to} &
\text{(i)} & w_i = s_i \times d_i \text{ for each task  } T_i \in V\\
&\text{(ii)} & t_i + d_j \leq t_j \text{ for each edge  } (T_i,T_j) \in E\\
&\text{(iii)} & t_i \leq D \text{ for each task  } T_i \in V 
\end{array}
\end{equation}

Constraint (i) states that the whole task can be executed in time
$d_i$ using speed $s_i$.
Constraint (ii) accounts for all dependencies, and constraint (iii)
ensures that the execution time does not exceed the deadline~$D$.
The energy consumed throughout the execution is the objective
function. It is the sum, for each task, of the energy consumed by this
task, as we detail in the next section.
Note that $d_i=w_i/s_i$, and
therefore the objective function can also be expressed as
$\sum_{i=1}^{n}  s_i^2 \times w_i$.

\subsection{Energy models}
\label{sec.mod}


In all models, when a processor operates at speed~$s$ during
$d$~time-units, the corresponding consumed energy
is $s^3 \times d$, which is the dynamic part of the energy
consumption, following the classical models of the
literature~\cite{280894,pruhsTCS,pow3,pow3IPDPS,pow3ICPP}.
Note that we do not take static energy into account, because
all processors are up and alive during the whole execution.
We now detail the possible speed values in each energy model,
which should be added as a constraint in Equation~(\ref{eq.opt}).
\begin{compactitem}
\item
In the \continuous model,
processors can have arbitrary speeds, from $0$ to a maximum
value~$\smax$, and a processor can change its speed at any time during
execution.
\item
In the \discrete model, processors have a set of possible
speed values, or modes, denoted as  $s_1,...,s_m$. There is no
assumption on the range and distribution of these modes.
The speed of a processor cannot change during
the computation of a task, but it can change from task to task.
\item
In the \VDD model,
a processor can run at different speeds $s_1,...,s_m$,
as in the previous model, but it can also change its speed during a
computation. The energy consumed during the execution of one task is
the sum, on each time interval with constant speed~$s$, of the energy
consumed during this interval at speed~$s$.
\item
In the \incremental model, we introduce a value~$\delta$ that
 corresponds to the minimum permissible speed (i.e., voltage)
 increment. That means that possible speed values are obtained as
$s=\smin + i\times \delta$, where $i$~is an integer such that
$0\leq i \leq \frac{\smax-\smin}{\delta}$.
Admissible speeds lie in the interval $[\smin,\smax]$.
This new model aims at capturing a realistic version of the \discrete
model, where the different modes are spread regularly between
$s_1=\smin$ and $s_m=\smax$, instead of being arbitrarily chosen.
It is intended as the modern counterpart of a potentiometer knob!
\end{compactitem}

\vspace{-.4cm}
\subsection{Example}
\label{sec.ex}
\vspace{-.4cm}

Consider an application with four tasks of costs $w_1=3$, $w_2=2$,
$w_3=1$ and $w_4=2$, and one precedence constraint
$T_1\rightarrow T_3$. We assume that $T_1$ and $T_2$ are allocated, in
this order, onto processor $P_1$, while $T_3$ and $T_4$
are allocated, in this order, on processor~$P_2$.
The resulting execution graph~$G$ is given in
Figure~\ref{example}, with two precedence constraints
added to the initial task graph.
The deadline on the execution time is~$D=1.5$.

We set the maximum speed to~$\smax=6$ for the
\continuous model. For the \discrete and \VDD
models, we use the set of speeds $s^{(d)}_1=2$, $s^{(d)}_2=5$ and
$s^{(d)}_3=6$. Finally, for the \incremental model, we set $\delta=2$,
$\smin=2$ and $\smax=6$,
so that possible speeds are $s^{(i)}_1=2$, $s^{(i)}_2=4$ and $s^{(i)}_3=6$.
We aim at finding the optimal execution speed~$s_i$ for each
task~$T_i$ ($1\leq i \leq 4$), i.e., the values of~$s_i$ which
minimize the energy consumption.

With the \continuous model, the optimal speeds are non rational 
values, and we obtain
$$s_1 = \frac23(3+35^{1/3})\simeq 4.18;
\quad s_2 = s_1\times \frac2{35^{1/3}}\simeq 2.56;
\quad s_3 = s_4 = s_1\times \frac3{35^{1/3}}\; \simeq 3.83.$$

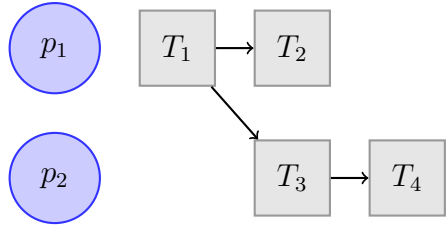
\begin{figure}
\begin{center}
\begin{tikzpicture}[scale=0.25]
\tikzstyle{proc}=[circle,
                                    thick,
                                    minimum size=1.2cm,
                                    draw=blue!80,
                                    fill=blue!20]
\tikzstyle{task}=[rectangle,
                                    thick,
                                    minimum size=1cm,
                                    draw=gray!80,
                                    fill=gray!20]
\matrix[row sep=0.5cm,column sep=0.5cm] {
        \node (p0) [proc] {$p_{1}$}; &
        \node (v0) [task] {$T_1$};       &
        \node (v1) [task] {$T_2$}; &
	&
        \\
        \node (p1) [proc] {$p_2$}; &
	&
        \node (v2)   [task] {$T_3$};       &
        \node (v3) [task] {$T_4$}; &
        \\
    };
    \path[->]
        (v0) edge[thick] (v1)
        (v0) edge[thick] (v2)
        (v2) edge[thick] (v3)
        ;
\end{tikzpicture}
\end{center}
 \vspace{-.5cm}
\caption{Execution graph for the example.}
\label{example}
\end{figure}

Note that all speeds are lower than the maximum~$\smax$.
These values are obtained thanks to the formulas derived in
Section~\ref{sec.continuous}.
The energy consumption is then
$E^{(c)}_{opt} = \sum_{i=1}^4 w_i \times s_i^2
  = 3.s_1^2+2.s_2^2+ 3.s_3^2 \simeq 109.6$.
The execution time is
$\frac{w_1}{s_1} + \max\left(\frac{w_2}{s_2},
  \frac{w_3+w_4}{s_3}\right)$,
and with this solution, it is equal to the deadline~$D$ (actually,
both processors reach the deadline, otherwise we could slow down the
execution of one task). 

For the \discrete model, if we execute all tasks at speed
$s^{(d)}_2=5$, we obtain an energy $E=8\times 5^2=200$. A better
solution is obtained with $s_1=s^{(d)}_3=6$, $s_2=s_3=s^{(d)}_1=2$ and
$s_4=s^{(d)}_2=5$, which turns out to be optimal:
$E^{(d)}_{opt}=3\times36+(2+1)\times4+2\times25=170$.
Note that $E^{(d)}_{opt}>E^{(c)}_{opt}$, i.e., the optimal energy
consumption with the \discrete model is much higher than the one
achieved with the \continuous model. Indeed, in this case, even though
the first processor executes during $3/6 + 2/2 = D$ time units, the second
processor remains idle since $3/6 + 1/2 + 2/5=1.4<D$. The problem
turns out to be NP-hard (see Section~\ref{sec.np}),
and the solution has been found by performing an
exhaustive search.

With the \VDD model, 
we set $s_1=s^{(d)}_2=5$; for the other tasks, we run part of the time at
speed~$s^{(d)}_2=5$, and part of the time at speed~$s^{(d)}_1=2$ in
order to use the idle time and lower the energy consumption.
$T_2$~is executed at speed $s^{(d)}_1$ during time~$\frac{5}{6}$
and at speed $s^{(d)}_2$ during time~$\frac{2}{30}$ (i.e., the
first processor executes during time $3/5 + 5/6 + 2/30 = 1.5 = D$,
and all the work for~$T_2$ is done: $2\times 5/6 + 5\times 2/30 = 2 = w_2$).
$T_3$~is executed at speed~$s^{(d)}_2$ (during time $1/5$),
and finally $T_4$~is executed at speed $s^{(d)}_1$
during time~$0.5$ 
and at speed $s^{(d)}_2$ during time~$1/5$ 
(i.e., the second processor executes during time
$3/5 + 1/5 + 0.5 + 1/5 = 1.5 = D$,
and all the work for~$T_4$ is done: $2\times 0.5 + 5\times 1/5 = 2 = w_4$).
This set of speeds turns out to be optimal (i.e., it is the optimal
solution of the linear program introduced in Section~\ref{sec.vdd}),
with an energy consumption
$E^{(v)}_{opt}=(3/5 + 2/30 + 1/5 + 1/5) \times 5^3 + (5/6 + 0.5)
\times 2^3 = 144$. As expected, $E^{(c)}_{opt}\leq E^{(v)}_{opt} \leq
E^{(d)}_{opt}$, i.e., the \VDD solution stands between the optimal
\continuous solution, and the more constrained \discrete solution.

For the \incremental model, the reasoning is similar to the \discrete
case, and the optimal solution is obtained by an exhaustive search:
all tasks should be executed at speed~$s^{(i)}_2=4$, with an energy
consumption~$E^{(i)}_{opt}=8 \times 4^2 = 128 > E^{(c)}_{opt}$.
It turns out to be
better than \discrete and \VDD, since it has different discrete values
of energy which are more appropriate for this example.

\section{The \continuous model}
\label{sec.continuous}

With the \continuous model, processor speeds can take any value
between $0$ and~$\smax$. First we prove that, with this model, the
processors do not change their speed during the execution of a task
(Section~\ref{sec.prel}).  Then, we derive in Section~\ref{sec.spec}
the optimal speed values
for special execution graph structures, expressed as closed form
algebraic formulas, and we show that these values may be irrational
(as already illustrated in the example in Section~\ref{sec.ex}).
Finally, we formulate the problem for general DAGs as a convex
optimization program 
in Section~\ref{sec.gen}.

\subsection{Preliminary lemma}
\label{sec.prel}

\begin{lemma}[constant speed per task]
\label{lem.prel}
With the \continuous model, each task is executed at constant speed,
i.e., a processor does not change its speed during the execution of
a task.
\end{lemma}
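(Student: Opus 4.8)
The plan is to argue by convexity. The key observation is that, once the time window devoted to a task is fixed, running at a single constant speed completes the required work with the least energy, and doing so disturbs nothing else in the schedule. So I would start from an arbitrary feasible solution in which task $T_i$ occupies the interval $[b_i,t_i]$ with a possibly time-varying speed profile $s(t)$, and show that replacing $s(t)$ by its time-average over this window can only decrease the energy.

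First I would isolate a single task $T_i$ executed over $[b_i,t_i]$ of length $d_i=t_i-b_i$. Feasibility forces the work done to equal $w_i$, that is $\int_{b_i}^{t_i} s(t)\,dt = w_i$, and the energy spent on $T_i$ is $\int_{b_i}^{t_i} s(t)^3\,dt$. Define the constant speed $\bar s_i = w_i/d_i$, which is exactly the time-average of $s$ over the window. Substituting the constant profile $\bar s_i$ for $s(t)$ on $[b_i,t_i]$ leaves $b_i$, $t_i$ and $w_i$ unchanged; hence no completion time moves, so every precedence constraint~(ii) and every deadline constraint~(iii) is preserved. It also keeps the speed admissible, since $\bar s_i = \frac{1}{d_i}\int_{b_i}^{t_i} s(t)\,dt \le \max_t s(t) \le \smax$.

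The energy comparison is the heart of the argument and follows from the strict convexity of $x\mapsto x^3$ on $[0,\infty)$. Applying Jensen's inequality to the uniform probability measure $dt/d_i$ on the window gives
\[
\frac{1}{d_i}\int_{b_i}^{t_i} s(t)^3\,dt \;\ge\; \left(\frac{1}{d_i}\int_{b_i}^{t_i} s(t)\,dt\right)^{3} = \bar s_i^{\,3},
\]
so the constant profile consumes $d_i\,\bar s_i^{\,3}=w_i^3/d_i^2$ joules, no more than the original profile. Strict convexity further guarantees that equality holds only when $s(t)$ is constant almost everywhere on the window.

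Finally I would apply this task by task: starting from any optimal solution and replacing each task's profile by its average cannot increase the total energy, and by strict convexity any genuinely optimal solution must already run each task at a single speed. I expect the main thing to watch is the bookkeeping that the substitution is non-disruptive -- it is precisely the fact that we hold the window endpoints $b_i,t_i$ fixed that decouples the tasks -- together with the admissibility check $\bar s_i\le\smax$; the convexity inequality itself is routine.
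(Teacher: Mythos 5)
Your proof is correct and rests on the same idea as the paper's: replace the varying speed on a task's fixed execution window by its time-average and invoke (strict) convexity of $x\mapsto x^3$ to conclude the energy strictly decreases unless the speed was already constant. The only cosmetic difference is that you apply Jensen's inequality to the whole profile at once, whereas the paper performs the exchange on the two constant pieces surrounding the first speed change and derives the same contradiction with optimality.
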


\begin{proof}
Suppose that in the optimal solution, there is a task whose
speed changes during the execution. Consider the first
time-step at which the change occurs:
the computation begins at speed~$s$ from time~$t$ to time~$t'$, and
then continues at speed~$s'$ until time~$t''$.
The total energy consumption for this task in the time
interval~$[t;t'']$ is $E=(t'-t)\times s^3+(t''-t')\times (s')^3$.
Moreover, the amount of work done for this task is
$W=(t'-t)\times s+(t''-t')\times s'$.

If we run the task during the whole interval~$[t;t'']$ at constant
speed~$W/(t''-t)$, the same amount of work is done within the same time.
However, the energy consumption during this interval of time is now
 $E'=(t''-t)\times (W/(t''-t))^3$.
By convexity of the function $x \mapsto x^3$, we obtain $E'<E$
since $t<t'<t''$. This contradicts the hypothesis of optimality of the
first solution, which concludes the proof.
\end{proof}

\subsection{Special execution graphs}
\label{sec.spec}

\subsubsection{Independent tasks}

Consider the problem of minimizing the energy of $n$ independent
tasks (i.e., each task is mapped onto a distinct processor, and there
are no precedence constraints in the execution graph),
while enforcing a deadline~$D$.
\begin{proposition}[independent tasks]
\label{ntask}
When $G$ is composed of independent tasks $\{T_1, \dots, T_n\}$,
the optimal solution to \MinE is obtained when each
task~$T_i$ ($1\leq i \leq n$) is computed at speed
$s_i=\frac{w_i}{D}$. If there is a task~$T_i$ such
that $s_i > \smax$, then the problem has no solution.
\end{proposition}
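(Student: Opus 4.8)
The plan is to exploit the fact that, for independent tasks, the optimization problem completely decouples across tasks. First I would rewrite the objective using the identity $d_i = w_i/s_i$ already noted in the excerpt, so that \MinE becomes the minimization of $\sum_{i=1}^{n} w_i s_i^2$. The key structural observation is that, since each task is mapped onto its own processor and $\mathcal{E}=\emptyset$ (no precedence constraints), the edge set $E$ is empty as well, so constraint (ii) of Equation~(\ref{eq.opt}) disappears entirely. Consequently the only constraints are, for each~$i$ independently, the deadline constraint (iii) and the speed bound $s_i \le \smax$.

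Next I would unfold the deadline constraint. With no dependencies, the completion time of $T_i$ is simply $t_i = d_i = w_i/s_i$, so constraint (iii) reads $w_i/s_i \le D$, which is equivalent to $s_i \ge w_i/D$. Thus each variable $s_i$ is constrained only to lie in the interval $[\,w_i/D,\ \smax\,]$, and the objective $\sum_i w_i s_i^2$ is a sum of terms $w_i s_i^2$ each depending on a single variable.

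I would then minimize term by term. For fixed $w_i > 0$, the map $s_i \mapsto w_i s_i^2$ is strictly increasing on the positive reals, so it is minimized by taking $s_i$ as small as the feasible region allows, namely $s_i = w_i/D$. Because the terms share no variables, making each one as small as possible simultaneously yields the global minimum, giving the claimed optimal speeds $s_i = w_i/D$ for all $i$.

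Finally, for the feasibility clause, I would observe that the interval $[\,w_i/D,\ \smax\,]$ is nonempty precisely when $w_i/D \le \smax$. Hence if some task satisfies $w_i/D > \smax$, the constraints $s_i \ge w_i/D$ and $s_i \le \smax$ are mutually exclusive, no admissible speed exists for that task, and the problem has no solution. I do not expect any genuine obstacle here: the whole difficulty is absorbed by the independence of the tasks, which removes constraint (ii) and lets the objective separate; the only point requiring a word of care is checking that the deadline constraint reduces \emph{exactly} to the lower bound $s_i \ge w_i/D$, which follows immediately from $t_i = d_i$ in the absence of predecessors.
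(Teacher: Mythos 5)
Your proposal is correct and follows essentially the same route as the paper: both arguments observe that with no precedence constraints the deadline forces $s_i \geq w_i/D$ for each task separately, that the energy is increasing in each $s_i$, and that infeasibility occurs exactly when $w_i/D > \smax$ for some task. You merely spell out the decoupling of the objective more explicitly than the paper does.
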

\begin{proof}
 For task~$T_i$, the speed~$s_i$ corresponds to the slowest speed at
 which the processor can execute the task,
so that the deadline is not exceeded.
If $s_i  > \smax$, the corresponding processor will never be able to
complete its execution before the deadline, therefore there is no
solution.
To conclude the proof, we note that any other solution would have
higher values of~$s_i$ because of the deadline constraint, and hence a
higher energy consumption. Therefore, this solution is optimal.
\end{proof}

\subsubsection{Linear chain of tasks}

This case corresponds for instance to $n$~independent tasks $\{T_1,
\dots, T_n\}$ executed
onto a single processor. The execution graph is then a linear chain
(order of execution of the tasks), with $T_i \rightarrow T_{i+1}$,
for $1\leq i < n$.




\begin{proposition}[linear chain]
	\label{ntask1proc}
$\;\;$ When $G$ is a linear chain of tasks, the optimal solution to \MinE is
obtained when each task is executed at
speed $s=\frac{W}{D}$, with $W=\sum_{i=1}^n w_i$.\\
If $s>\smax$, then there is no solution.
\end{proposition}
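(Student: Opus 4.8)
The plan is to exploit the fact that in a linear chain the tasks are executed strictly sequentially on a single processor, so the whole chain behaves exactly like a single computation whose speed is allowed to change over time. This is precisely the situation already handled by the convexity argument of Lemma~\ref{lem.prel}, now applied to the concatenation of all $n$ tasks rather than to a single one. I therefore expect no genuinely new difficulty, only some bookkeeping of the global elapsed time and the total work.

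First I would reduce Equation~(\ref{eq.opt}) to a one-dimensional constraint. Since the execution is sequential, the completion time of the last task is $T=\sum_{i=1}^{n} d_i=\sum_{i=1}^{n} w_i/s_i$, and the precedence and deadline constraints (ii)--(iii) collapse to the single requirement $T\le D$. Using $d_i=w_i/s_i$, the objective rewrites as $\sum_{i=1}^{n} s_i^3 d_i$, i.e.\ the total dynamic energy accumulated over the interval $[0,T]$.

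The key step is a single application of convexity of the map $x\mapsto x^3$ (Jensen's inequality). Setting $\alpha_i=d_i/T$, so that $\sum_i \alpha_i=1$, the energy equals $T\sum_i \alpha_i s_i^3 \ge T\bigl(\sum_i \alpha_i s_i\bigr)^3 = T\,(W/T)^3 = W^3/T^2$, where I used $\sum_i \alpha_i s_i = \frac1T\sum_i d_i s_i = \frac1T\sum_i w_i = W/T$. Since $T\le D$ and $t\mapsto W^3/t^2$ is decreasing, this is at least $W^3/D^2$. Chasing the equality cases forces on one hand $T=D$ (the deadline is tight) and, by strict convexity, $s_1=\dots=s_n$ on the other; together they pin the common value to $s_i=W/T=W/D$. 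As this uniform speed is feasible and attains the bound, it is optimal.

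Finally I would settle existence. The common speed $W/D$ meets the deadline exactly and is admissible precisely when $W/D\le\smax$. If instead $W/D>\smax$, then any admissible assignment has $T=\sum_i w_i/s_i \ge \sum_i w_i/\smax = W/\smax > D$, so no schedule meets the deadline and the problem has no solution. The only point deserving care is to keep the deadline as an inequality and let the monotonicity $W^3/T^2\ge W^3/D^2$ absorb the ``tight deadline'' conclusion into the same convexity computation, rather than assuming tightness beforehand; this is where a hasty argument could slip.
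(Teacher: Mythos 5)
Your proof is correct, and it reaches the conclusion by a noticeably different route than the paper. The paper runs an exchange argument: it assumes an optimal solution with two tasks at different speeds $s_i<s_j$, replaces both by the time-preserving barycentric speed $s$, and invokes convexity of $x\mapsto x^2$ (weighted by work) to strictly decrease the energy, contradicting optimality; equal speeds plus a tight deadline then force $s=W/D$. You instead collapse the chain constraints to the single inequality $T=\sum_i w_i/s_i\le D$ and apply Jensen's inequality once, globally, to $x\mapsto x^3$ with time-weights $\alpha_i=d_i/T$, obtaining the explicit lower bound $W^3/T^2\ge W^3/D^2$ and reading off the optimum from the equality cases. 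The two arguments rest on the same convexity fact (your weighted mean $\sum_i\alpha_i s_i=W/T$ is exactly the paper's barycentric speed applied to all tasks at once), but yours is a direct variational bound rather than a perturbation of an assumed optimum: it avoids presupposing that an optimal solution exists, it yields the closed-form minimal energy $W^3/D^2$ as a byproduct (which is what Corollary~\ref{cor.chain} implicitly uses), and it handles the tightness of the deadline cleanly through the monotonicity of $t\mapsto W^3/t^2$ instead of asserting it. Your infeasibility argument ($T\ge W/\smax>D$ for any admissible speeds) is also slightly more explicit than the paper's one-line claim. The only implicit hypothesis, shared with the paper, is that all $w_i>0$ so that every $\alpha_i$ is positive and strict convexity pins down uniqueness.
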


\begin{proof}
%
Suppose that in the optimal solution, tasks~$T_i$ and~$T_j$ are such
that $s_i < s_j$. 
The total energy consumption is~$E_{opt}$.
We define $s$ such that the execution of both tasks running at
speed~$s$ takes the same amount of time than in the optimal solution,
i.e., $(w_i+w_j)/s = w_i/s_i + w_j/s_j$:
$s = \frac{(w_i+w_j)}{w_is_j + w_js_i} \times s_is_j$.
Note that $s_i < s < s_j$ (it is the barycenter of two points with
positive mass).

We consider a solution such that the speed
of task~$T_k$, for $1\leq k \leq n$, with $k\neq i$ and $k \neq j$, is
the same as in the optimal solution, and the speed of tasks $T_i$
and~$T_j$ is~$s$. By definition of~$s$, the execution time has not
been modified. The energy consumption of this solution is~$E$,
where $E_{opt}-E = w_is_i^{2} + w_js_j^{2} -(w_i+w_j)s^{2}$, i.e.,
the difference of energy with the optimal solution is only impacted by
tasks $T_i$ and~$T_j$, for which the speed has been modified.
By convexity of the function $x \mapsto x^{2}$, we obtain
$E_{opt}>E$, which contradicts its optimality.
Therefore, in the optimal solution, all tasks have the same execution
speed. Moreover, the energy consumption is minimized when the speed is
as low as possible, while the deadline is not exceeded. Therefore, the
execution speed of all tasks is $s=W/D$.
\end{proof}

\begin{corollary}
\label{cor.chain}
A linear chain with $n$~tasks is equivalent to a single task of cost
$W=\sum_{i=1}^n w_i$.
\end{corollary}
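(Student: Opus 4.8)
The plan is to prove Corollary~\ref{cor.chain} as a direct consequence of Proposition~\ref{ntask1proc}, establishing that a linear chain of $n$ tasks can be replaced by a single aggregated task of cost $W=\sum_{i=1}^n w_i$ without affecting the \MinE problem. The notion of ``equivalent'' here means that the two instances admit the same optimal energy consumption (and feasibility condition), so I would establish equivalence by comparing the optimal solutions on both sides.

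First I would recall from Proposition~\ref{ntask1proc} that the optimal solution for a linear chain runs every task at the common speed $s=W/D$, yielding total energy $E_{opt}=\sum_{i=1}^n w_i s^2 = W (W/D)^2 = W^3/D^2$, and that the problem is feasible exactly when $W/D \leq \smax$. Next I would compute the corresponding quantities for a single task of cost $W$ executed under deadline $D$: by Proposition~\ref{ntask} (the single-task case of independent tasks, with $n=1$), this task is optimally run at speed $s=W/D$, giving energy $W (W/D)^2 = W^3/D^2$, and is feasible precisely when $W/D \leq \smax$. Since both the optimal energy values and the feasibility conditions coincide, the two instances are interchangeable.

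The final step is to argue that this equivalence is not merely numerical but structural, so that the substitution can be performed inside a larger execution graph $G$ containing the chain as a subgraph. Here I would observe that any external precedence edges into or out of the chain attach to its first and last tasks; replacing the chain by a single task preserves these interfaces, and since the chain necessarily occupies a total duration of exactly $W/s$ regardless of how the deadline slack is distributed, the timing constraints~(ii) and~(iii) in Equation~(\ref{eq.opt}) seen by the rest of the graph are unchanged. Thus any schedule of $G$ projects to a schedule of the reduced graph with identical energy, and conversely.

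I expect the main obstacle to be making the structural equivalence fully rigorous rather than the arithmetic, which is immediate. Specifically, the delicate point is justifying that collapsing the chain preserves optimality \emph{within a general graph}: one must confirm that in an optimal global schedule the tasks of an internal chain indeed run at a single common speed (the argument of Proposition~\ref{ntask1proc} applies locally once the chain's start time and available duration are fixed by the surrounding structure) and that the aggregated task faithfully reproduces the chain's earliest-start/latest-finish behavior. Given the level of the corollary, however, I anticipate the authors treat this as a short remark, simply invoking Proposition~\ref{ntask1proc} to note that the chain behaves exactly like a single task of cost $W$.
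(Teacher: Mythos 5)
Your proposal is correct and follows essentially the same route as the paper: the authors justify the corollary with a one-line remark that, by Proposition~\ref{ntask1proc}, all $n$ tasks run at the same speed in the optimal solution and can therefore be replaced by a single task of cost $W$ executed at that speed with identical energy consumption. Your additional discussion of why the substitution remains valid inside a larger execution graph (interfaces at the chain's endpoints, preserved timing constraints) is a reasonable elaboration of a point the paper leaves implicit, but it is not a different argument.
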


\noindent Indeed, in the optimal solution, the $n$ tasks are executed at the
same speed, and they can be replaced by a single task of cost~$W$,
which is executed at the same speed and consumes  the
same amount of energy.

\subsubsection{Fork and join graphs}

Let $V\!=\!\{T_1,\dots,T_n\}$.
We consider either a fork graph $G = (V\cup\{T_{0}\}, E)$,
with $E=\{(T_{0},T_i),T_i \in V\}$, or a join graph $G =
(V\cup\{T_{0}\}, E)$, with $E=\{(T_i,T_{0}),T_i\in V\}$.
$T_{0}$~is either the source of the fork or the sink of
the join.

\begin{theorem}[fork and join graphs]
\label{opt_fork}
When $G$ is a fork (resp. join) execution graph with $n+1$ tasks
$T_0,T_1,\dots,T_n$, the optimal solution to \MinE is the following:\\
${ }\quad \bullet$ the execution speed of the source (resp. sink)~$T_0$ is
$s_{0} = \dfrac{\left(\sum_{i=1}^n w_i^{3}\right)^{\frac{1}{3}} +
  w_{0}}{D}$\; ; \\
${ }\quad \bullet$ for the other tasks~$T_i$, $1\leq i \leq n$, we have
$s_i = s_{0} \times \dfrac{w_i}{\left(\sum_{i=1}^n
    w_i^{3}\right)^{\frac{1}{3}}}$
if $s_{0}\leq\smax$\; .\\[.2cm]
Otherwise, $T_{0}$ should be executed at speed
$s_0=\smax$, and the other speeds are $s_i = \frac{w_i}{D'}$, with
$D'=D-\frac{w_{0}}{\smax}$, if they do not exceed~$\smax$
(Proposition~\ref{ntask} for independent tasks).
Otherwise there is no solution.

If no speed exceeds $\smax$, the corresponding
energy consumption is
\[
{\bf minE}(G,D)=\frac{\left((\sum_{i=1}^n w_i^3)^{\frac{1}{3}} +
    w_0\right)^3}{D^2}\; .
\]
\end{theorem}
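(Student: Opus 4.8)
The plan is to exploit the symmetry between the fork and the join. In both cases the only timing constraint binds each outer task $T_i$ ($1 \le i \le n$) to the central task $T_0$ along a single path. For the fork, $T_0$ completes at time $d_0$ and $T_i$ at $d_0+d_i$; for the join, the $T_i$ run first and $T_0$ starts only after all of them finish, completing at $\max_i d_i + d_0$. In either case constraint (iii) of Equation~(\ref{eq.opt}) reduces to $d_0 + d_i \le D$ for every $i$, so the two problems coincide. By Lemma~\ref{lem.prel} each task runs at constant speed, and writing $d_i = w_i/s_i$ the objective becomes $\sum_{i=0}^n w_i s_i^2 = \sum_{i=0}^n w_i^3/d_i^2$.

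First I would fix the duration $t_0 := d_0$ of the central task and optimize the remaining tasks for this fixed value. For fixed $t_0$ the energy contribution $w_i^3/d_i^2$ of each outer task is strictly decreasing in $d_i$, so it is optimal to make each $d_i$ as large as the deadline permits, namely $d_i = D - t_0$. This collapses all $n$ outer durations into a single quantity and reduces the problem to minimizing the one-variable function
\begin{equation*}
E(t_0) = \frac{w_0^3}{t_0^2} + \frac{\sum_{i=1}^n w_i^3}{(D-t_0)^2}
\end{equation*}
over $t_0 \in (0,D)$.

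Next I would minimize $E$. Both terms are convex on $(0,D)$, so $E$ is convex with a unique interior minimizer obtained from $E'(t_0)=0$. Taking cube roots of the resulting identity $w_0^3/t_0^3 = (\sum_i w_i^3)/(D-t_0)^3$ gives $w_0/t_0 = (\sum_i w_i^3)^{1/3}/(D-t_0)$, which solves to $t_0 = w_0 D / ((\sum_i w_i^3)^{1/3} + w_0)$. Substituting back produces $s_0 = w_0/t_0$ and $s_i = w_i/(D-t_0)$ in exactly the stated closed forms, and a final substitution into $E$ collapses to $((\sum_i w_i^3)^{1/3}+w_0)^3/D^2$.

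Finally I would handle the speed cap. Since $w_i \le (\sum_j w_j^3)^{1/3}$ for each $i$, the formula gives $s_i = s_0\, w_i/(\sum_j w_j^3)^{1/3} \le s_0$, so checking $s_0 \le \smax$ alone certifies feasibility of the entire solution. If the unconstrained optimum instead demands $s_0 > \smax$, then feasibility requires $t_0 \ge w_0/\smax$; since $E$ is convex with its unconstrained minimizer strictly below $w_0/\smax$, the function is increasing on the feasible range and the constrained optimum sits at the boundary $s_0 = \smax$. The leftover deadline $D' = D - w_0/\smax$ then governs the outer tasks, which become independent and are dispatched by Proposition~\ref{ntask}. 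I expect the main obstacle to be the reduction in the second step: justifying rigorously that every outer task exhausts the full slack $D - t_0$, so that the genuinely $n$-dimensional problem collapses to a single variable. Once that is secured, the minimization and the $\smax$ case analysis are routine.
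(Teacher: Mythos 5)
Your proposal is correct and follows essentially the same route as the paper: fix the duration $t_0$ of the central task, collapse the outer tasks to $d_i=D-t_0$ via the independent-tasks result (Proposition~\ref{ntask}), and minimize the resulting one-variable function, clamping $s_0$ to $\smax$ when the unconstrained optimum is infeasible. Your version is marginally more careful on two points the paper leaves implicit --- verifying convexity of the one-variable objective (the paper only sets $f'(s_0)=0$) and observing that $s_i\leq s_0$ so that only the central speed needs checking against $\smax$ --- but these are refinements of the same argument, not a different one.
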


\begin{proof}
Let $t_0=\frac{w_{0}}{s_0}$. Then, the source or the sink requires a
time~$t_0$ for execution. For $1\leq i \leq n$, task~$T_i$
must be executed within a time~$D-t_0$ so that the deadline is
respected. Given~$t_0$, we can compute the speed~$s_i$ for task~$T_i$
using Theorem~\ref{ntask}, since the tasks are independent:
$s_i = \frac{w_i}{D-t_0} = w_i \cdot \frac{s_0}{s_0 D - w_0}$.
The objective is therefore to minimize
$\sum_{i=0}^n w_i s_i^{2}$, which is a function of~$s_0$:

$$\sum_{i=0}^n w_i s_i^{2} = w_0s_0^2 +
\sum_{i=1}^n w_i^3 \cdot \frac{s_0^2}{(s_0D - w_0)^2}
= s_0^2 \left( w_0 +  \frac{\sum_{i=1}^n w_i^3}{(s_0 D - w_0)^2}\right)
= f(s_0).
$$

\noindent Let $W_3 = \sum_{i=1}^n w_i^3$.
In order to find the value of $s_0$ which minimizes this function, we
study the function~$f(x)$, for $x>0$.
$f'(x) = 2x \left(w_{0} + \frac{W_3}{(x D-w_{0})^2}\right)  -
2D \cdot x^{2} \cdot \frac{W_3}{(x D-w_{0})^{3}} $, and therefore
$f'(x)=0$ for $x=(W_3^{\frac{1}{3}} + w_0)/D$.
We conclude that the optimal speed for task~$T_0$ is
$s_0 = \frac{\left(\sum_{i=1}^n w_i^{3} \right)^{\frac{1}{3}} +
  w_{0}}{D}$, if $s_0 \leq \smax$. Otherwise, $T_0$~should be executed
at the maximum speed $s_0 =\smax$, since it is the bottleneck task.
In any case, for $1\leq i \leq n$, the optimal speed for task~$T_i$ is
$s_i = w_i \frac{s_0}{s_0 D - w_0}$.

Finally, we compute the exact expression of {\bf minE}$(G,D) =
f(s_0)$, when $s_0\leq \smax$:
$$f(s_0) = s_0^2 \left( w_0 +  \frac{W_3}{(s_0 D - w_0)^2}\right)
	=\left(\frac{W_3^{\frac{1}{3}} +
    w_0}{D}\right)^2 \left(\frac{W_3}{W_3^{2/3}}
 +w_0 \right )
=\frac{\left(W_3^{\frac{1}{3}} +
    w_0\right)^3}{D^2}, $$ 
which concludes the proof.
\end{proof}

\begin{corollary}[equivalent tasks for speed]
	\label{cor.fork.speed}
  Consider a fork or join graph with tasks~$T_i$, $0\leq i \leq n$,
  and a deadline~$D$, and assume that the speeds in the optimal
  solution to \MinE do not
  exceed~$\smax$.
   Then, these speeds are the same as in the optimal solution for  $n+1$
  independent tasks $T'_0, T'_1, \dots, T'_n$, where
  $w'_0 = \left (\sum_{i=1}^n w_{i}^{3} \right )^{\frac{1}{3}} + w_{0}$,
  and, for $1\leq i \leq n$,
  $w'_i=w'_0 \cdot\frac{w_i}{\left (\sum_{i=1}^n w_{i}^{3}
  \right )^{\frac{1}{3}}}\;$.
\end{corollary}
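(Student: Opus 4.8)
The plan is to compare two closed-form expressions for the optimal speeds: the one supplied by Theorem~\ref{opt_fork} for the fork/join execution graph $G$, and the one supplied by Proposition~\ref{ntask} for the associated collection of independent tasks $T'_0, T'_1, \dots, T'_n$, and to verify that they coincide coordinate by coordinate. Since both results give explicit formulas, no fresh optimization argument is needed beyond invoking these two earlier statements; the corollary is essentially a matching of formulas.

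First I would set $W_3 = \sum_{i=1}^n w_i^3$ and record the optimal fork/join speeds from Theorem~\ref{opt_fork} under the standing hypothesis that no speed exceeds $\smax$: the source (resp.\ sink) $T_0$ runs at $s_0 = (W_3^{1/3} + w_0)/D$, and each remaining task at $s_i = s_0 \cdot w_i / W_3^{1/3}$ for $1 \le i \le n$. Next I would apply Proposition~\ref{ntask} to the independent instance with costs $w'_i$, which yields $s'_i = w'_i/D$ for every $i$. For the special task, $s'_0 = w'_0/D = (W_3^{1/3} + w_0)/D = s_0$, so the definition $w'_0 = W_3^{1/3} + w_0$ is exactly what makes $s'_0$ reproduce $s_0$. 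For $1 \le i \le n$, substituting $w'_i = w'_0 \cdot w_i / W_3^{1/3}$ gives $s'_i = (w'_0/D) \cdot w_i/W_3^{1/3} = s'_0 \cdot w_i/W_3^{1/3} = s_0 \cdot w_i/W_3^{1/3} = s_i$. Hence the two speed vectors agree.

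There is no substantial obstacle here; the only point requiring care is the logical role of the $\smax$ hypothesis. It serves two purposes that I would make explicit. On one side, it guarantees that the fork/join formula actually in force is the uncapped branch of Theorem~\ref{opt_fork}, rather than the saturated branch where $s_0 = \smax$. On the other side, it is what lets Proposition~\ref{ntask} apply to the independent-task instance: that proposition requires $s'_i = w'_i/D \le \smax$ for a valid solution, and this holds precisely because the computation above shows $s'_i = s_i$, while the $s_i$ are assumed not to exceed $\smax$ by hypothesis. I would therefore close the argument by noting that, under the stated assumption, both formulas are simultaneously valid and produce identical speeds, so the optimal speeds for the fork/join graph coincide with those for the equivalent independent tasks.
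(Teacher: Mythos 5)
Your proposal is correct and matches the paper's intent exactly: the corollary is stated without a separate proof precisely because it amounts to substituting the definitions of $w'_0$ and $w'_i$ into the speed formula $s'_i = w'_i/D$ of Proposition~\ref{ntask} and observing that the results coincide with the speeds of Theorem~\ref{opt_fork}. Your additional remark on the dual role of the $\smax$ hypothesis is a sound (and welcome) clarification, not a deviation.
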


\begin{corollary}[equivalent task for energy]
	\label{cor.fork.energy}
Consider a fork or join graph $G$ and a deadline~$D$, and assume that the speeds
in the optimal solution to \MinE do not exceed~$\smax$.
We say that the graph~$G$ is \emph{equivalent} to the graph
$G^{(eq)}$, consisting of a single task $T^{(eq)}_0$
of weight
$w^{(eq)}_0 = \left (\sum_{i=1}^n w_{i}^{3} \right )^{\frac{1}{3}} + w_{0}$,
because the minimum energy consumption of both graphs are identical:
{\bf minE}$(G,D)$={\bf minE}$(G^{(eq)},D)$.
\end{corollary}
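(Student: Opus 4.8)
The plan is to obtain this corollary as an immediate consequence of the closed-form energy formula established in Theorem~\ref{opt_fork}, together with the trivial single-task instance of Proposition~\ref{ntask}. No new argument is really needed: the work has already been done in proving the theorem, and what remains is a one-line substitution.

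First I would recall that, under the stated hypothesis that no speed in the optimal solution exceeds~$\smax$, Theorem~\ref{opt_fork} gives
\[
{\bf minE}(G,D)=\frac{\left((\sum_{i=1}^n w_i^3)^{\frac{1}{3}} + w_0\right)^3}{D^2}.
\]
Next I would compute ${\bf minE}(G^{(eq)},D)$ directly. Since $G^{(eq)}$ consists of the single task $T^{(eq)}_0$ of cost $w^{(eq)}_0$, Proposition~\ref{ntask} (applied with a single task) shows that the optimal strategy is to run it at the slowest speed meeting the deadline, namely $s=w^{(eq)}_0/D$. The resulting energy is $w^{(eq)}_0 \cdot s^2 = (w^{(eq)}_0)^3/D^2$. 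Substituting the definition $w^{(eq)}_0 = (\sum_{i=1}^n w_i^3)^{\frac{1}{3}} + w_0$ into this expression reproduces verbatim the formula for ${\bf minE}(G,D)$ displayed above, which establishes the claimed equality ${\bf minE}(G,D)={\bf minE}(G^{(eq)},D)$.

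The only point deserving a word of care, rather than a genuine obstacle, is the consistency of the $\smax$ hypothesis across the two graphs. I would observe that the equivalent task's speed $w^{(eq)}_0/D$ coincides exactly with the source/sink speed $s_0$ of Theorem~\ref{opt_fork}; hence the assumption $s_0\leq\smax$ for $G$ is precisely what guarantees that $G^{(eq)}$ also admits a feasible solution with speed at most~$\smax$. Consequently both minimum-energy values are well defined and are given by the same closed form, concluding the proof.
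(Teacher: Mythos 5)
Your proposal is correct and matches the paper's (implicit) argument: the corollary is stated as an immediate consequence of the closed-form expression ${\bf minE}(G,D)=\bigl((\sum_{i=1}^n w_i^3)^{1/3}+w_0\bigr)^3/D^2$ from Theorem~\ref{opt_fork}, compared against the single-task energy $(w^{(eq)}_0)^3/D^2$. Your added remark on the consistency of the $\smax$ hypothesis between $G$ and $G^{(eq)}$ is a sensible refinement that the paper leaves unstated.
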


\subsubsection{Trees}

We extend the results on a fork graph for a tree
$G=(V,E)$ with $|V|=n+1$ tasks. Let $T_0$ be the root of the tree;
it has $k$~children tasks, which are each themselves the root of a
tree. A tree can therefore be seen as a fork graph, where the tasks of
the fork are trees.

The previous results for fork graphs naturally lead to an algorithm
that peels off branches of the tree, starting with the leaves, and
replaces each fork subgraph in the tree, composed of a root $T_0$ and
$k$ children, by one task (as in Corollary~\ref{cor.fork.energy})
which
becomes the unique child of $T_0$'s parent in the tree. We say that
this task is {\em equivalent} to the fork graph, since the optimal
energy consumption will be the same.  The computation of the {\em
  equivalent} cost of this task is done thanks to a call to the {\bf
  eq} procedure, while the {\bf tree} procedure computes the
solution to \MinE (see
Algorithm~\ref{algo.tree}). 
Note that the algorithm computes the minimum energy for a tree, but it
does not return the speeds at which each task must be
executed. However, the algorithm returns the speed of the root task,
and it is then straightforward to compute the speed of each children
of the root task, and so on.

\begin{theorem}[tree graphs]
\label{th.tree}
When $G$ is a tree rooted in~$T_0$ ($T_0 \in V$, where $V$ is the set
of tasks), 
 the optimal solution to \MinE can be computed in polynomial time
 $O(|V|^2)$.
\end{theorem}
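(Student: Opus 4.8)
The plan is to prove that the minimum energy for a tree can be computed in time $O(|V|^2)$ by analyzing the two procedures described before the statement: the recursive peeling algorithm that collapses each fork subgraph into a single equivalent task, together with the auxiliary {\bf eq} procedure that computes the equivalent cost. The correctness of the output value follows directly from the earlier structural results, so the theorem really has two parts: (a) argue correctness, and (b) bound the running time by $O(|V|^2)$. I expect part (b) to be the substance of the proof, since correctness is essentially a restatement of Corollary~\ref{cor.fork.energy}.

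For correctness, I would proceed by induction on the height of the tree. A tree rooted at $T_0$ with $k$ children is, as observed in the text, a fork graph whose ``children'' are themselves subtrees. By the induction hypothesis, each child subtree can be replaced by a single equivalent task (in the sense of Corollary~\ref{cor.fork.energy}) that yields the same minimum energy consumption under any deadline. After this replacement the local structure at $T_0$ is exactly a fork graph on $T_0$ and $k$ leaf-like equivalent tasks, so Theorem~\ref{opt_fork} and Corollary~\ref{cor.fork.energy} apply: the whole subtree rooted at $T_0$ collapses to one equivalent task of weight $w_0^{(eq)} = \left(\sum_{i=1}^k (w_i^{(eq)})^{3}\right)^{1/3} + w_0$, preserving the minimum energy. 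Peeling the leaves upward until only the root remains therefore produces a single task whose cost, plugged into the closed-form energy expression, equals \MinE for the entire tree.

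For the complexity bound, I would argue that the algorithm visits each internal node exactly once when it collapses the fork rooted there, and that the work done at a node with $k$ children is $O(k)$, since computing $\left(\sum_{i=1}^k w_i^{3}\right)^{1/3}$ and the derived equivalent weight is linear in the number of children (each child contributes one cube, one sum term, and one division). Summing $k$ over all internal nodes counts each edge once, giving $O(|V|)$ total arithmetic work for the energy computation. The extra factor that pushes the bound to $O(|V|^2)$ comes from the downward pass that recovers the speeds: the algorithm returns the root speed, and then for each node one recomputes the speeds of its children via the ratio in Theorem~\ref{opt_fork}; propagating this from the root to all $|V|$ nodes, where each step may re-traverse a path of length up to $O(|V|)$, accounts for the quadratic term. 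I would state this as a loose but safe bound rather than optimizing the constant.

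The main obstacle I anticipate is being precise about exactly which quantity the $O(|V|^2)$ measures, because the peeling pass alone is essentially linear; the quadratic factor must come either from the cost of retrieving per-task speeds on the downward traversal or from a naive accounting of repeated equivalent-weight recomputations. I would resolve this by clearly separating the two phases — a bottom-up collapse that computes the energy value, and a top-down expansion that assigns speeds — and showing that even under the most pessimistic bookkeeping each phase costs at most $O(|V|^2)$, so the overall bound holds. The subtlety worth flagging is that all reasoning presumes no speed exceeds $\smax$ (the hypothesis under which Corollary~\ref{cor.fork.energy} holds); if a collapsed fork forces $s_0 > \smax$ at some internal node, the equivalence argument breaks and one must fall back to the capped case of Theorem~\ref{opt_fork}, which I would note handles the boundary but may complicate the clean recursive structure.
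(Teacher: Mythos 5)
Your induction for the uncapped case matches the paper's argument: collapse each fork bottom-up via Corollary~\ref{cor.fork.energy}, reducing the tree to a single equivalent task of cost $w^{(eq)}_0 = \bigl(\sum_i (w'_i)^3\bigr)^{1/3} + w_0$. But the part you defer as a ``subtlety worth flagging'' --- what happens when the collapsed root would require $s_0 > \smax$ --- is actually the crux of the theorem, and leaving it unresolved is a genuine gap. The paper handles it with a specific structural observation: in a tree, the speed of a task in the optimal solution is always at least the speed of each of its successors. This monotonicity is what justifies the fallback: if $w^{(eq)}_0/D > \smax$, the root is the (unique) bottleneck, so one may pin $T_0$ to $\smax$ and then solve each subtree \emph{independently} with the reduced deadline $D - w_0/\smax$, recursing on the same procedure. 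Without this observation you cannot conclude that capping only the root is optimal, and the ``clean recursive structure'' you worry about breaking is in fact preserved precisely because of it.

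This also changes the complexity accounting. The algorithm in the paper does not perform a top-down pass recovering all per-task speeds (it explicitly returns only the energy and the root speed), so your attribution of the quadratic term to such a pass does not match the algorithm being analyzed. The $O(|V|^2)$ bound comes instead from the $\smax$ fallback: the procedure {\bf tree} may be re-invoked at up to $|V|$ nodes (each time a capped root forces independent treatment of its subtrees), and each invocation calls {\bf eq}, which costs $O(|V|)$ to recompute an equivalent weight over the subtree. Your observation that the pure bottom-up collapse is linear is correct --- and that is exactly why the capped case cannot be waved away: it is the only source of the quadratic factor, and the only place where the correctness argument requires more than Corollary~\ref{cor.fork.energy}.
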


\begin{proof}
Let $G$ be a tree graph rooted in~$T_0$.
 The optimal solution to \MinE is obtained with a call to {\bf tree}
  $(G,T_0,D)$, and we prove its optimality recursively on the depth of
  the tree.  Similarly to the case of the fork graphs, we reduce the
  tree to an equivalent task which, if executed alone within a
  deadline~$D$, consumes exactly the same amount of energy. The
  procedure {\bf eq} is the procedure which reduces a tree to its
  equivalent task  (see Algorithm~\ref{algo.tree}).

  If the tree has depth~$0$, then it is a single task, {\bf
    eq}~$(G,T_0)$ returns the equivalent cost~$w_0$, and the optimal
   execution speed is $\frac{w_0}{D}$ (see
  Proposition~\ref{ntask}). There is a solution if and only if this
  speed is not greater than~$\smax$, and then the corresponding energy
  consumption is $\frac{w_0^3}{D^2}$, as returned by the algorithm.

Assume now that for any tree of depth~$i<p$, {\bf eq} computes its
equivalent cost, and {\bf tree} returns its optimal energy
consumption. We consider a tree~$G$ of depth~$p$ rooted in~$T_0$:
$G = T_0 \cup \{G_i\}$, where each subgraph~$G_i$ is a tree, rooted
in~$T_i$, of maximum depth~$p-1$.
As in the case of forks, we know that each subtree~$G_i$ has a
deadline $D-x$, where $x=\frac{w_0}{s_0}$, and $s_0$ is the speed at
which task~$T_0$ is executed.
By induction hypothesis, we suppose that each graph $G_i$ 
is equivalent to a single task, $T'_i$, of cost $w'_i$
(as computed by the procedure {\bf eq}).
We can then use the results obtained on forks to compute $w^{(eq)}_0$
(see proof of Theorem~\ref{opt_fork}):\\[-.6cm]
\begin{center}$w^{(eq)}_0 = \displaystyle \left ( \sum_{i} (w'_{i})^{3} \right
)^{\frac{1}{3}} + w_{0}$.\end{center}

Finally the tree is equivalent to one task of cost $w^{(eq)}_0$, and
if $\frac{w^{(eq)}_0}{D}\leq \smax$, the energy consumption is
 $\frac{\left(w^{(eq)}_0\right)^3}{D^2}$, and no speed
 exceeds~$\smax$.

Note that the speed of a task is always greater than the speed of its
successors. Therefore, if $\frac{w^{(eq)}_0}{D}> \smax$, we execute
the root of the tree at speed~$\smax$ and then process each
subtree~$G_i$ independently. Of course, there is no solution if
$\frac{w_0}{\smax}>D$, and otherwise we perform the recursive calls to
{\bf tree} to process each subtree independently. Their deadline is
then $D-\frac{w_0}{\smax}$.

\medskip
To study the time complexity of this algorithm, first note
that when calling {\bf tree}~$(G,T_0,D)$, there might be at most $|V|$
recursive calls to {\bf tree}, once at each node of the tree. Without
accounting for the recursive calls, the {\bf tree} procedure performs
one call to the {\bf eq} procedure, which computes the cost of the
equivalent task. This {\bf eq} procedure takes a time~$O(|V|)$, since
we have to consider the $|V|$ tasks, and we add the costs one by
one. Therefore, the overall complexity is in~$O(|V|^2)$.
\end{proof}

\begin{algorithm}
\caption{Solution to \MinE for trees.\label{algo.tree}
}

procedure {\bf tree} (tree $G$, root $T_0$, deadline~$D$)\\
\Begin{
Let $w$={\bf eq} (tree $G$, root $T_0$)\;
\eIf{$\frac{w}{D}\leq \smax$ }
  {\Return $\frac{w^3}{D^2}$;}
{\eIf{$\frac{w_0}{\smax} > D$}
  {\Return{Error:No Solution;}}
{ {\em /* $T_0$ is executed at speed $\smax$ */ }\\
  \Return{$w_0 \times \smax^2 + \displaystyle \!\!\!\sum_{G_i\mbox{ subtree
        rooted in }T_i\in\mbox{children}(T_0)} \!\!\!{\bf tree} \left(G_i,T_i,D-\frac{w_0}{\smax}\right)$}\;
    }}
}

\medskip

procedure {\bf eq} (tree $G$, root $T_0$)\\
\Begin{
\eIf{children($T_0$)=$\emptyset$}
  {\Return $w_0$;}
  {
  \Return{$\left(\displaystyle \sum_{G_i\mbox{ subtree rooted in }T_i\in\mbox{children}(T_0)}
      \left(\mbox{\bf eq} (G_i,T_i)\right)^3 \right)^{\frac{1}{3}} +
    w_0$
};
  }
}
\end{algorithm}

\subsubsection{Series-parallel graphs}

We can further generalize our
results to series-parallel graphs (SPGs), which are built from a
sequence of compositions (parallel or series) of smaller-size SPGs.
The smallest SPG consists of two nodes connected by an edge (such a
graph is called an {\em elementary SPG}). The first
node is the source, while the second one is the sink of the SPG. When
composing two SGPs in series, we merge the sink of the first SPG with
the source of the second one. For a parallel composition, the two
sources are merged, as well as the two sinks, as illustrated
in Figure~\ref{Compo}.

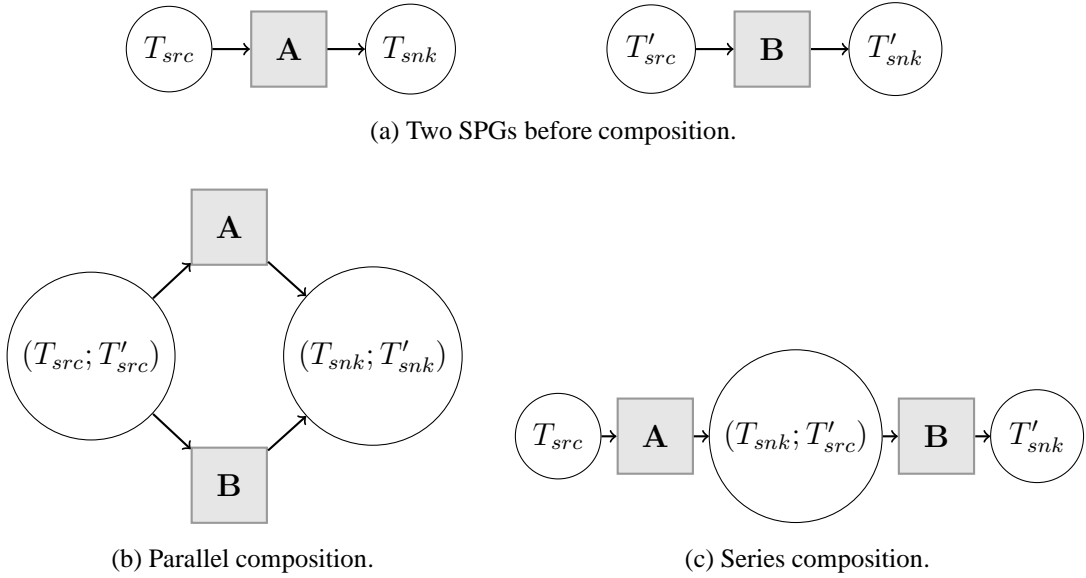
\begin{figure}
\begin{center}
 \subfloat[Two SPGs before composition.] 
{
\begin{tikzpicture}[scale=0.1]
\tikzstyle{state}=[circle,
                                     draw]

\tikzstyle{tree}=[rectangle,
                                    thick,
                                    minimum size=1cm,
                                    draw=gray!80,
                                    fill=gray!20]
  \matrix[row sep=0.1cm,column sep=0.5cm,ampersand replacement=\&] {
        \node (v0) [state] {$T_{src}$}; \&
        \node (A) [tree] {$\mathbf{A}$};\&
        \node (vn) [state] {$T_{snk}$}; \&\&\&\&

        \node (v'0) [state] {$T'_{src}$}; \&
        \node (B)   [tree] {$\mathbf{B}$};\&
        \node (v'n) [state] {$T'_{snk}$}; \&
        \\
    };
    \path[->]
        (v0) edge[thick] (A)
        (A) edge[thick] (vn)
        (v'0) edge[thick] (B)
        (B)   edge[thick] (v'n)
        ;
 \end{tikzpicture}
\label{SP_before}
}

 \subfloat[Parallel composition.]
{
\begin{tikzpicture}[scale=0.1]
\tikzstyle{state}=[circle,
                                    draw]
 \tikzstyle{tree}=[rectangle,
                                    thick,
                                    minimum size=1cm,
                                    draw=gray!80,
                                    fill=gray!20]
  \matrix[row sep=0.01cm,column sep=0.2cm,ampersand replacement=\&] {
        \&
        \node (A) [tree] {$\mathbf{A}$}; \&
        \&
        \\
        \node (v0) [state] {$(T_{src};T'_{src})$}; \&
        \&
        \node (vn)   [state] {$(T_{snk};T'_{snk})$};     \&
        \\
        \&
        \node (B)   [tree] {$\mathbf{B}$};       \&
        \&
        \\
	};
    \path[->]
        (v0) edge[thick] (A)
        (A) edge[thick] (vn)
        (v0) edge[thick] (B)
        (B)   edge[thick] (vn)
        ;
 \end{tikzpicture}
\label{Par-Compo}
}
\subfloat[Series composition.]
{
\begin{tikzpicture}[scale=0.1]
\tikzstyle{state}=[circle,
                                    draw]
 \tikzstyle{tree}=[rectangle,
                                    thick,
                                    minimum size=1cm,
                                    draw=gray!80,
                                    fill=gray!20]
  \matrix[row sep=0.01cm,column sep=0.2cm,ampersand replacement=\&] {
	\&
	\&
	$~$\&
	\&
	\&
	\\
        \node (v0) [state] {$T_{src}$}; \&
        \node (A) [tree] {$\mathbf{A}$}; \&
        \node (v'0)   [state] {$(T_{snk};T'_{src})$};     \&
        \node (B)   [tree] {$\mathbf{B}$};       \&
        \node (v'n) [state] {$T'_{snk}$}; \&
        \\
	\&
	\&
	$~$\&
	\&
	\&
	\\
    };
    \path[->]
        (v0) edge[thick] (A)
        (A) edge[thick] (v'0)
        (v'0) edge[thick] (B)
        (B)   edge[thick] (v'n)
        ;

\end{tikzpicture}
\label{Ser-Compo}
}
\caption{Composition of series-parallel graphs (SPGs).}
\label{Compo}
\end{center}
\vspace{-.5cm}
\end{figure}

We can extend the results for tree graphs to SPGs, by replacing step
by step the SPGs by an equivalent task (procedure {\bf
  cost} in Algorithm~\ref{algo.spg}): we can compute the equivalent
cost for a series or parallel composition.

However, since it is no longer true that the speed of a task is always
larger than the speed of its successor (as was the case in a tree),
we have not been able to find a recursive property on the tasks that
should be set to~$\smax$, when one of the speeds obtained with the
previous method exceeds~$\smax$. The problem of computing a closed
form for a SPG with a finite value of~$\smax$ remains open. Still,
we have the following result when $\smax = +\infty$:

\begin{theorem}[series-parallel graphs]
\label{th.spg}
When $G$ is a SPG, it is possible to compute recursively a closed form
expression of the optimal solution of \MinE, assuming $\smax =
+\infty$, in polynomial time~$O(|V|)$, where $V$~is the set of tasks.
\end{theorem}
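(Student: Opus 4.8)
The plan is to mirror the proof for trees (Theorem~\ref{th.tree}): I would reduce the SPG, one composition at a time, to a single equivalent task, and prove that the minimum energy of any sub-SPG~$H$ executed within deadline~$D$ has the form $K_H/D^2$, exactly as a single task of cost $w^{(eq)}_H=K_H^{1/3}$ run alone within~$D$ (Proposition~\ref{ntask1proc} and Corollary~\ref{cor.chain}). I would carry this out by structural induction on the series-parallel decomposition, with the elementary SPG (a single edge, i.e.\ a chain of two tasks) as the base case, handled by Corollary~\ref{cor.chain}, which gives $w^{(eq)}=w_{\text{src}}+w_{\text{snk}}$. Because we assume $\smax=+\infty$, no speed ever saturates, so the convexity arguments of Lemma~\ref{lem.prel}, Proposition~\ref{ntask1proc} and Theorem~\ref{opt_fork} apply unconditionally and the energy-as-a-function-of-deadline always keeps the shape $K/D^2$.

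The heart of the proof is the two composition rules, each inherited from a result already proved. For a \emph{series} composition $H=H_1\cdot H_2$, the merged pivot task is a descendant of all of $H_1$ and an ancestor of all of $H_2$, so the two sub-SPGs run in disjoint time windows; the total energy is $K_1/D_1^2+K_2/D_2^2$ minimized over $D_1+D_2=D$, and the same computation as in the linear-chain case yields the minimum $(K_1^{1/3}+K_2^{1/3})^3/D^2$. Hence equivalent costs add: $w^{(eq)}_H=w^{(eq)}_{H_1}+w^{(eq)}_{H_2}$. For a \emph{parallel} composition, the internal parts of $H_1$ and $H_2$ lie between the same merged source and merged sink, hence share a common available window of length $D'$; two executions running in parallel within one window simply add their energies, so $K'=K_1'+K_2'$ and therefore the equivalent cost combines as $(w_1^3+w_2^3)^{1/3}$, exactly the rule validated for the parallel branches of a fork in Theorem~\ref{opt_fork} and Corollary~\ref{cor.fork.energy}; the shared source and sink then fold in via the series rule. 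In both cases the energy stays of the form $K/D^2$, so the induction closes and the resulting $w^{(eq)}$ of the whole SPG gives a closed-form optimum $(w^{(eq)})^3/D^2$.

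The step I expect to be the main obstacle is the parallel case, specifically its interaction with a finite $\smax$. In a tree the speed of a task always dominates the speeds of its successors, which lets the algorithm decide by a simple monotone rule which tasks to pin at $\smax$; in an SPG this monotonicity is lost, since a task can be simultaneously constrained from above by one branch and from a parallel branch, and there is no obvious recursive criterion for which tasks to saturate. This is precisely why the statement restricts to $\smax=+\infty$: removing the cap eliminates all feasibility branching, so that at each composition one only combines two positive reals by either $x+y$ (series) or $(x^3+y^3)^{1/3}$ (parallel), and the clean closed form survives. I would make this role of $\smax$ explicit, echoing the open problem flagged before the theorem.

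Finally, for the complexity claim, I would invoke that the series-parallel decomposition tree of~$G$ can be built in linear time and has $O(|V|)$ internal nodes, one per composition. Each composition applies one of the two constant-time rules, and the speeds are recovered by a single top-down back-substitution from the root (as in the tree case), so the whole reduction runs in $O(|V|)$, which establishes the bound.
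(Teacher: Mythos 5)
Your proof follows essentially the same route as the paper's: induction on the series-parallel decomposition, with the elementary two-task SPG as base case, equivalent costs combining additively for series compositions and as $\left((w'_1)^3+(w'_2)^3\right)^{1/3}$ for parallel ones (both inherited from the chain and fork/join results), the same remark on the role of $\smax=+\infty$, and the same $O(|V|)$ accounting of one constant-time rule per composition. The only detail to make explicit is the endpoint bookkeeping in the series rule: the paper first sets the source and sink costs of each sub-SPG to zero and adds them back separately (yielding $w_0+w'_1+w_1+w'_2+w_2$, so the merged pivot task is counted exactly once), whereas writing $w^{(eq)}_{H_1}+w^{(eq)}_{H_2}$ with both equivalent costs including their endpoints would double-count the merged task.
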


\begin{proof}
  Let $G$ be a series-parallel graph. The optimal solution to \MinE is
  obtained with a call to {\bf SPG}~$(G,D)$, and we prove its
  optimality recursively. Similarly to trees, the main idea is to
  peel the graph off, and to transform it until there remains only a
  single equivalent task which, if executed alone within a
  deadline~$D$, would consume exactly the same amount of energy. The
  procedure {\bf cost} is the procedure which reduces a tree to its
  equivalent task (see Algorithm~\ref{algo.spg}).

  The proof is done by induction on the number of compositions
  required to build the graph~$G$,~$p$. If $p=0$, $G$~is an elementary
  SPG consisting in two tasks, the source~$T_0$ and the sink~$T_1$. It
  is therefore a linear chain, and therefore equivalent to a single
  task whose cost is the sum of both costs, $w_0+w_1$ (see
  Corollary~\ref{cor.chain} for linear chains). The procedure {\bf
    cost} returns therefore the correct equivalent cost, and {\bf SPG}
  returns the minimum energy consumption.

  Let us assume that the procedures return the correct equivalent cost
  and minimum energy consumption for any SPG consisting of $i<p$
  compositions. We consider a SPG~$G$, with $p$ compositions. By
  definition, $G$ is a composition of two smaller-size SPGs, $G_1$
  and~$G_2$, and both of these SPGs have strictly fewer than
  $p$~compositions. We consider $G'_1$ and~$G'_2$, which are identical
  to $G_1$ and $G_2$, except that the cost of their source and sink
  tasks are set to~$0$ (these costs are handled separately), and we
  can reduce both of these SPGs to an equivalent task, of respective
  costs $w'_1$ and~$w'_2$, by induction hypothesis. There are two
  cases:

\begin{compactitem}
 \item  If $G$ is a series composition, then after the reduction of
   $G'_1$ and~$G'_2$, we have a linear chain in which we consider the
   source~$T_0$ of~$G_1$, the sink~$T_1$ of~$G_1$ (which is also the
   source of~$G_2$), and the sink~$T_2$ of~$G_2$. The equivalent
   cost is therefore $w_0 + w'_1 + w_1 + w'_2 + w_2$, thanks to
   Corollary~\ref{cor.chain} for linear chains.

\item If $G$ is a parallel composition,
the resulting graph is a fork-join graph, and we can use Corollaries
\ref{cor.chain} and~\ref{cor.fork.energy} to compute the cost of the
equivalent task, accounting for the source~$T_0$  and the sink~$T_1$:
$w_0 + \left((w'_1)^3+(w'_2)^3\right)^\frac{1}{3} + w_1$.
\end{compactitem}

Once the cost of the equivalent task of the SPG has been computed with
the call to {\bf cost}~$(G)$, the optimal energy consumption is
$\frac{\left({\bf cost}(G)\right)^3}{D^2}$.

\medskip
Contrarily to the case of tree graphs, since we never need to call the
{\bf SPG} procedure again because there is no constraint on~$\smax$,
the time complexity of the algorithm is the complexity of the {\bf
  cost} procedure. There is exactly one call to {\bf cost} for each
composition, and the number of compositions in the SPG is
in~$O(|V|)$. All operations in {\bf cost} can be done in~$O(1)$,
hence a complexity in $O(|V|)$.
\end{proof}

\begin{algorithm}
\caption{Solution to \MinE for series-parallel graphs.
\label{algo.spg}}
procedure {\bf SPG} (series-parallel graph $G$, deadline $D$)\\
\Begin{
\Return{$\frac{\left(\mbox{\bf cost}(G)\right)^3}{D^2}$};
}

procedure {\bf cost} (series-parallel graph $G$)\\
\Begin{
Let $T_0$ be the source of $G$ and $T_1$ its sink\;
\eIf{$G$ is composed of only two tasks, $T_0$ and $T_1$}
{\Return{$w_0+w_1$};}
{{\em /* $G$ is a composition of two SPGs $G_1$ and~$G_2$. */}\\
  For $i=1,2$, let $G'_i=G_i$ where the cost of source and sink tasks is
set to~$0$\; 
$w'_1=\mbox{\bf cost}(G'_1); \; w'_2=\mbox{\bf cost}(G'_2)$\;
\eIf{$G$ is a series composition}
{Let $T_0$ be the source of~$G_1$, $T_1$ be its sink, and $T_2$ be the
sink of~$G_2$\;
\Return{$w_0 + w'_1 + w_1 + w'_2 + w_2$};}
{{\em /* It is a parallel composition. */}\\
Let $T_0$ be the source of~$G$, and $T_1$ be its sink\;
\Return{$w_0 + \left((w'_1)^3+(w'_2)^3\right)^\frac{1}{3} + w_1$};}
}}
\end{algorithm}

\subsection{General DAGs}
\label{sec.gen}

For arbitrary execution graphs, we can rewrite the \MinE problem as follows:

\begin{equation}
\label{eq.opt2}
\begin{array}{lrl}
\text{~~~Minimize}& & \sum_{i=1}^{n}  u_i^{-2} \times w_i\\
\text{~~~subject to}
&\text{(i)} & t_i + w_j \times u_j \leq t_j \text{ for each edge  }
(T_i,T_j) \in E\\
&\text{(ii)}  & t_i \leq D \text{ for each task  } T_i \in V\\
&\text{(iii)} & u_i \geq \frac{1}{\smax} \text{ for each task  } T_i \in V\\
\end{array}
\end{equation}

Here, $u_i = 1/s_i$ is the inverse of the speed to execute task
$T_i$. We now have
a convex optimization problem to solve, with linear constraints in the
non-negative variables $u_i$ and $t_i$.
In fact, the objective function is a posynomial, so we have a
geometric programming problem (see~\cite[Section~4.5]{Boyd2004})
for which efficient numerical schemes exist. However, as illustrated
on simple fork graphs, the optimal speeds
are not expected to be rational numbers but instead arbitrarily
complex expressions (we have the cubic root of the sum of cubes
for forks, and nested expressions of this form for trees). From a
computational complexity point of view,
we do not know how to encode such numbers in polynomial size of the
input (the rational task weights and the execution deadline).
Still, we can always solve the problem numerically and get fixed-size
numbers which are good approximations of the optimal values.

\medskip In the following, we show that the total power consumption of
any optimal schedule is constant throughout execution.  While this
important property does not help to design an optimal solution, it
shows that a schedule with large variations in its power consumption
is likely to waste a lot of energy.

We need a few notations before stating the result. Consider a schedule
for a graph~$G=(V,E)$ with $n$ tasks.  Task $T_i$ is executed at
constant speed $s_i$ (see Lemma~\ref{lem.prel}) and during interval
$[b_i,c_i]$: $T_i$ begins its execution at time $b_i$ and completes it
at time $c_i$. The total power consumption $P(t)$ of the schedule at
time $t$ is defined as the sum of the power consumed by all tasks
executing at time $t$:
$$P(t) = \sum_{1 \leq i \leq n, \; t \in [b_i,c_i]} s_i^3\; . $$

\begin{theorem}
	\label{powconst}
Consider an instance of \continuous, and an optimal schedule for this instance, such that no speed is equal
to~$\smax$. Then the total power consumption of the schedule throughout execution is constant.
\end{theorem}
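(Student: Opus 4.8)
The plan is to argue by a local exchange (perturbation) at the breakpoints of the power profile, using optimality to rule out any jump. First I would record the identity tying energy to power: since each task runs at constant speed on $[b_i,c_i]$ (Lemma~\ref{lem.prel}), the total energy equals $\int_0^D P(t)\,dt$, and $P$ is piecewise constant, with discontinuities only at the instants where some task begins or ends. To prove $P$ is constant it therefore suffices to show that $P$ does not jump across any breakpoint $\tau\in(0,D)$.

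Fix such a $\tau$ and split the tasks into those finishing exactly at $\tau$ (set $F$, with $c_i=\tau$), those beginning exactly at $\tau$ (set $S$, with $b_i=\tau$), and all others. Tasks straddling $\tau$ contribute equally to the power immediately before and after $\tau$, so the left/right power values satisfy $P^{+}-P^{-}=\sum_{i\in S}s_i^{3}-\sum_{i\in F}s_i^{3}$. Assuming for contradiction that $P^{-}\neq P^{+}$, I would perturb the schedule by sliding the breakpoint to $\tau+\varepsilon$: each task of $F$ keeps its start and now finishes at $\tau+\varepsilon$ (duration changes by $+\varepsilon$), each task of $S$ keeps its completion time and now starts at $\tau+\varepsilon$ (duration changes by $-\varepsilon$), and every other task, in particular every straddling task, is left untouched. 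Using $E_i=w_i^{3}/d_i^{2}$, so that $dE_i/dd_i=-2s_i^{3}$, the first-order energy change is $\Delta E = 2\varepsilon\,(P^{+}-P^{-})$; choosing the sign of $\varepsilon$ opposite to that of $P^{+}-P^{-}$ then strictly decreases the energy, contradicting optimality and forcing $P^{-}=P^{+}$.

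The feasibility of this perturbation is the technical heart, and I expect it to be the main obstacle, since precedence constraints couple the tasks. Here I would verify that for $|\varepsilon|$ small the schedule stays feasible: an edge $S\to F$ is impossible (it would already be violated, as an $S$-task finishes after $\tau$ while an $F$-task begins before $\tau$), so the only tight edges between the two groups run $F\to S$, and such pairs move together consistently; every non-tight precedence constraint, the deadline, and all constraints involving straddling tasks hold with slack and survive a small $\varepsilon$. The remaining care is routine bookkeeping, namely that $\varepsilon$ is small enough to cross no further task boundary (there are finitely many breakpoints) and that all durations stay positive.

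Finally, I would point out precisely where the hypothesis that no speed equals $\smax$ enters. Whichever sign of $\varepsilon$ decreases the energy, it necessarily shortens the duration, and hence raises the speed, of either the $F$-tasks or the $S$-tasks; this is admissible only because those speeds lie strictly below $\smax$ and can still be increased. Thus the assumption is exactly what guarantees that the improving direction is feasible, completing the contradiction and the proof.
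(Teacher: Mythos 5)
Your proof is correct, but it takes a genuinely different route from the paper's. The paper argues by induction on the number of tasks: a preliminary lemma shows that (unless the graph is a set of independent tasks) the earliest completion time $t_1$ and latest starting time $t_2$ satisfy $0<t_1\leq t_2<D$; the schedule is then cut at $t_1$ and at $t_2$ into sub-DAGs with fewer tasks, the induction hypothesis gives constant power on $[0,t_2]$ and on $[t_1,D]$, and the overlap of these intervals yields a single constant (the degenerate case $t_1=t_2$ is handled by an explicit two-interval computation). You instead run a local first-order perturbation at each breakpoint of the piecewise-constant power profile: sliding the breakpoint by $\varepsilon$ changes the energy by $2\varepsilon\left(P^{+}-P^{-}\right)+O(\varepsilon^2)$ via $dE_i/dd_i=-2s_i^3$, so optimality forces $P^{+}=P^{-}$. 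Your feasibility analysis is the right one and is complete: by definition of the sets $F$ and $S$, every precedence constraint affected by the shift is either tight between an $F$-task and an $S$-task (and both endpoints move together) or holds with strict slack, and edges from $S$ to $F$ cannot exist. What your approach buys is locality and transparency: it avoids the (somewhat delicate) step in the paper's induction where one must argue that the restrictions of the optimal schedule to the sub-DAGs $G'$ and $G''$ are themselves optimal, and it makes explicit exactly where the hypothesis that no speed equals $\smax$ is used, namely to guarantee that the energy-decreasing direction (which shortens one group's durations and raises their speeds) remains admissible. What the paper's approach buys is a structural decomposition of optimal schedules that is reused in spirit elsewhere (the cut at $t_1$, $t_2$ mirrors the fork/join analysis) and a closed-form identification of the constant power in the base case. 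Both arguments rest on Lemma~\ref{lem.prel}.
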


\begin{proof}
We prove this theorem by induction on the number of tasks of the
graph. First we prove a preliminary result:

\begin{lemma}
  \label{conn2}
  Consider a graph~$G=(V,E)$ with $n \geq 2$ tasks, and any optimal
  schedule of deadline~$D$.  Let $t_1$ be the earliest completion
  time of a task in the
  schedule. Similarly, let $t_2$ be the latest starting time of a task in the
  schedule. Then, either $G$ is composed of independent tasks, or
  $0<t_1\leq t_2<D$.
\end{lemma}

\begin{proof} 
Task $T_i$ is executed at speed $s_i$ and during interval $[b_i,c_i]$. We have $t_1 = \min_{1\leq i \leq n} c_i$
and $t_2 = \max_{1\leq i \leq n} b_i$. Clearly, $0 \leq t_1,t_2 \leq D$ by definition of the schedule.
Suppose that $t_2 < t_1$. Let $T_1$ be a task that
ends at time~$t_1$, and $T_2$ one that starts at time~$t_2$. Then:
\begin{compactitem}
\item $\nexists T\in V,~(T_1,T) \in E$ (otherwise, $T$ would start
  after $t_2$), therefore, $t_1=D$; 
\item $\nexists T\in V,~(T,T_2) \in E$ (otherwise, $T$ would finish
  before $t_1$); therefore $t_2=0$. 
\end{compactitem}
This also means that all tasks start at time~$0$ and end at time~$D$. Therefore,
$G$ is only composed of independent tasks.
\end{proof}

Back to the proof of the theorem, we consider first the case of a
graph with only one task. In an optimal schedule, the task is executed
in time~$D$, and at constant speed (Lemma~\ref{lem.prel}), hence with constant
power consumption.

Suppose now that the property is true for all DAGs with at most $n-1$
tasks. Let $G$ be a DAG with $n$ tasks.  If $G$ is exactly composed of $n$
independent tasks, then we know that the
power consumption of~$G$ is constant (because all task speeds are
constant). Otherwise, let $t_1$ be the earliest completion time,
and $t_2$ the latest starting time of a task in the optimal
schedule.  Thanks to Lemma~\ref{conn2}, we have  $0<t_1\leq t_2<D$.

Suppose first that $t_1=t_2=t_0$. There are three kinds of
tasks: those beginning at time~$0$ and ending at time~$t_0$ (set~$S_1$), those
beginning at time~$t_0$ and ending at time~$D$ (set~$S_2$), and finally those
beginning at time~$0$ and ending at time~$D$ (set~$S_3$). Tasks in $S_3$
execute during the whole schedule duration, at constant speed, hence their contribution to the
total power consumption $P(t)$  is the same at each time-step $t$. Therefore, we can suppress them
from the schedule without loss of generality. Next we determine the value of $t_0$. 
Let $A_1=\sum_{T_i\in S_1} w_i^3$, and
$A_2=\sum_{T_i\in S_2} w_i^3$. The energy consumption between $0$ and
$t_0$ is $\frac{A_1}{t_0^2}$, and between $t_0$ and $D$, it is
$\frac{A_2}{(D-t_0)^2}$. The optimal energy consumption is obtained with
$t_0=\frac{A_1^{\frac{1}{3}}}{A_1^{\frac{1}{3}}+A_2^{\frac{1}{3}}}$.
Then, the total power consumption of the optimal schedule is the same in both intervals, hence at each time-step: we derive that
$P(t) = \left(\frac{A_1^{\frac{1}{3}}+A_2^{\frac{1}{3}}}{D}\right)^3$, which
is constant.

Suppose now that $t_1<t_2$.  For each task~$T_i$, let $w'_i$ be the
number of operations executed before~$t_1$, and $w''_i$ the number of
operations executed after~$t_1$ (with $w'_i+w''_i=w_i$).  Let $G'$
be the DAG~$G$ 
with execution costs~$w'_i$, and $G''$ be the DAG~$G$ 
with execution costs~$w''_i$. The tasks with a cost equal to~$0$ are
removed from the DAGs. Then, both $G'$ and $G''$ have strictly fewer
than $n$~tasks. We can therefore apply the induction hypothesis. We derive that
the power consumption in both DAGs is constant.  
Since we did not change the speeds of the tasks, the total power consumption $P(t)$ in $G$
is the same as in $G'$ if $t < t_1$, hence a constant. Similarly,  
the total power consumption $P(t)$ in $G$
is the same as in $G''$ if $t > t_1$, hence a constant.
Considering the same partitioning with $t_2$ instead of $t_1$, we show that the total power consumption
$P(t)$ is a constant before $t_2$, and also a constant after $t_2$. But $t_1 < t_2$, and the intervals
$[0,t_2]$ and $[t_1,D]$ overlap. Altogether, the total power consumption is the same constant throughout $[0,D]$,
which concludes the proof.
\end{proof}

\section{Discrete models}
\label{sec.alldiscrete}
\vspace{.5cm}

In this section, we present complexity results on the three
energy models with a finite number of possible speeds.
The only polynomial instance is for the \VDD model, for which we write
a linear program in Section~\ref{sec.vdd}. Then, we give
NP-completeness results in Section~\ref{sec.np}, and approximation
results in Section~\ref{sec.approx}, for the \discrete and
\incremental models.


\subsection{The \VDD model}
\label{sec.vdd}

\begin{theorem}
\label{th.vdd}
With the \VDD model, \MinE
can be solved in polynomial time.
\end{theorem}

\begin{proof}
Let $G$ be the execution graph of an application with $n$ tasks, and $D$ a deadline.
Let $s_1,...,s_m$ be the set of possible processor speeds.
We use the
following rational variables: for $1\leq i \leq n$ and $1\leq j \leq m$,
$b_i$ is the starting time of the execution of task~$T_i$,
and $\alpha_{(i,j)}$ is the time spent at speed~$s_j$ for executing
task~$T_i$. There are $n + n\times m = n(m+1)$ such variables.
Note that the total execution time of task~$T_i$ is
$\sum_{j=1}^m \alpha_{(i,j)}$.
The constraints are: 
\begin{compactitem}
\item $\forall 1\leq i\leq n, \; b_i\geq 0$: starting times of all
  tasks are non-negative numbers;
\item $\forall 1\leq i \leq n, \; b_i+\sum_{j=1}^m
  \alpha_{(i,j)}\leq D$: the deadline is not exceeded by any task;
\item $\forall 1 \leq i,i' \leq n$ such that $T_i\rightarrow T_{i'}$,
$\; t_i+\sum_{j=1}^m \alpha_{(i,j)}\leq t_{i'}$: a task
cannot start before its predecessor has completed its execution;
\item $\forall 1\leq i \leq n, \; \sum_{j=1}^m
  \alpha_{(i,j)} \times s_j\geq w_i$:  task $T_i$ is completely executed.
\end{compactitem}

The objective function is then
  $\min\left(\sum_{i=1}^n\sum_{j=1}^m \alpha_{(i,j)}s_j^3\right)$.

The size of this linear program is clearly polynomial in the size of
the instance, all $n(m+1)$ variables are rational, and therefore it
can be solved in polynomial time~\cite{SchrijverComb}.
\end{proof}

\subsection{NP-completeness results}
\label{sec.np}

\begin{theorem}
\label{th.np}
With the \incremental model (and hence the \discrete model), \\
\MinE is NP-complete.
\end{theorem}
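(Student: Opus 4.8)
The plan is to establish NP-completeness of \MinE under the \incremental model by a reduction from a known NP-complete problem. Since the decision version asks whether there is a speed assignment with energy at most some bound $K$ while meeting the deadline $D$, membership in NP is routine: given the speeds $s_i$ for each task (each encoded as $\smin + i\delta$, so polynomial in size), one checks in polynomial time that the deadline constraint (iii) holds along every path and that the total energy $\sum_i w_i s_i^2$ is at most $K$. The substance lies entirely in the hardness direction.

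**First I would** choose the source problem. The natural candidate is a partition-style problem, since the combinatorial difficulty here is that of distributing a fixed ``slack budget'' among tasks sharing a processor, where each task must round its continuous-optimal speed to one of the regularly-spaced discrete levels $\smin + i\delta$. I expect a reduction from 2-PARTITION (or a variant such as SUBSET-SUM): given integers $a_1,\dots,a_k$ summing to $2S$, one asks whether a subset sums to exactly $S$. The idea is to build an instance with one processor (a linear chain, or a small collection of chains) whose tasks have costs tied to the $a_j$, and to set $\delta$, $\smin$, $\smax$ and the deadline $D$ so that meeting the deadline forces each task to run at one of exactly two admissible speeds, and hitting the energy bound $K$ exactly is possible if and only if the chosen ``fast/slow'' split corresponds to a valid partition. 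The discreteness of the model (only $\smin$ and $\smin+\delta$ being feasible under the time constraint) is what converts a continuous optimization into a binary choice per task, which is precisely where the combinatorial hardness enters.

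**The hard part will be** calibrating the numerical parameters so that (a) exactly two speeds are admissible for the relevant tasks, (b) the deadline is met if and only if the total ``work at the slow speed'' is bounded appropriately, and (c) the energy objective, which is quadratic in the speeds rather than linear, separates the YES-instances from the NO-instances at a clean threshold $K$. Because energy scales as $s^2 w$ while time scales as $w/s$, the deadline and energy constraints pull in opposite directions, and one must verify that the quadratic energy expression, summed over the two-valued speed choices, reduces to an affine function of the partition indicator plus a constant — so that $E = K$ is equivalent to the subset summing to the target. I would set up the algebra symbolically, keeping $\delta$ as the gap between the two active levels, and check that cross terms cancel. A secondary obstacle is ensuring all constructed task weights and the deadline remain polynomially bounded and rational in the size of the PARTITION instance, and that speeds stay within $[\smin,\smax]$; once the \incremental hardness is shown, the \discrete result follows immediately, since the \incremental model is a special case of \discrete (regularly spaced modes are a particular choice of the arbitrary modes $s_1,\dots,s_m$), so any \incremental instance is also a \discrete instance with the same optimum.
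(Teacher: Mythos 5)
Your plan is essentially the paper's proof: the paper reduces from 2-Partition using a linear chain with $w_i=a_i$ and a model having exactly two speeds ($\smin=1$, $\smax=2$, $\delta=1$), deadline $D=3T/2$ and energy bound $5T$ where $T=\frac12\sum_i a_i$, so that the deadline constraint forces $\sum_{i\in I}a_i\leq T$ while the energy constraint forces $\sum_{i\notin I}a_i\leq T$, and the two together force equality. The calibration you flag as the hard part is immediate in this construction --- no cross terms arise, since with only two admissible speeds both the execution time and the energy are affine in $\sum_{i\in I}a_i$ --- so your outline is correct and matches the paper's argument.
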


\begin{proof}
We consider the associated decision problem:
given an execution graph, a deadline, and a bound on the energy
consumption,
can we find an execution speed for each task such that the deadline
and the bound on energy are respected?
The problem is clearly in NP:
given the execution speed of each task, computing the execution time
and the energy consumption can be done in polynomial time.

To establish the completeness, we use a reduction from
2-Partition~\cite{GareyJohnson}. We consider an instance $\II_1$ of
2-Partition: given $n$ strictly positive integers
$a_1, \ldots, a_n$, does there exist a subset $I$ of
$\{1, \ldots, n\}$ such that
$\sum_{i\in I}a_i=\sum_{i \notin I}a_i$? Let
$T=\frac{1}{2}\sum_{i=1}^n a_i$.

\medskip
We build the following
instance~$\II_2$ of our problem: the execution graph is a linear chain
with $n$~tasks, where:
\begin{compactitem}
\item task $T_i$ has size $w_i=a_i$;
\item the processor can run at $m=2$ different speeds;
\item $s_1=1$ and $s_2=2$, (i.e., $\smin=1, \smax=2, \delta=1$);
\item $L=3T/2$;
\item $E=5T$.
\end{compactitem}
Clearly, the size of $\II_2$ is polynomial in the size of $\II_1$.

\medskip
\emph{Suppose first that instance $\II_1$ has a solution~$I$.} For all $i\in I$,
 $T_i$ is executed at speed~$1$, otherwise it is executed at speed~$2$.
The execution time is then $\sum_{i\in I}a_i+\sum_{i\notin
  I}a_i/2=\frac{3}{2}T = D$,
and the energy consumption is $E=\sum_{i\in I} a_i+\sum_{i\notin I}a_i
\times 2^2 = 5T = E$. Both bounds are respected, and therefore the
execution speeds are a solution to~$\II_2$.

\medskip
\emph{Suppose now that $\II_2$ has a solution.} Since we consider the
\discrete and \incremental models, each task run either at speed~$1$, or at
speed~$2$.
Let $I=\{i\; |\; T_i\mbox{ is executed at speed }1\}$.
Note that we have $\sum_{i\notin I}a_i = 2T - \sum_{i\in I}a_i$.

The execution time is 
$D'=\sum_{i\in I}a_i+\sum_{i\notin  I}a_i/2 = T + (\sum_{i\in I}a_i)/2$.
Since the deadline is not exceeded, $D' \leq D = 3T/2$, and therefore
$\sum_{i\in I}a_i \leq T$.

For the energy consumption of the solution
of~$\II_2$, we have $E' = \sum_{i\in I} a_i+\sum_{i\notin I}a_i
\times 2^2 = 2T + 3\sum_{i\notin I}a_i$. Since $E'\leq E=5T$, we
obtain $3\sum_{i\notin I}a_i\leq 3T$, and hence $\sum_{i\notin
  I}a_i\leq T$.
\smallskip

Since $\sum_{i\in I}a_i + \sum_{i\notin I}a_i = 2T$, we conclude that
$\sum_{i\in I}a_i = \sum_{i\notin I}a_i =T$, and therefore $\II_1$~has
a solution. This concludes the proof.
\end{proof}

\subsection{Approximation results}
\label{sec.approx}

Here we explain, for the \incremental and \discrete models,
how the solution to the NP-hard problem can be approximated.
Note that, given an execution graph and a deadline,
the optimal energy consumption with the \continuous model is
always lower than that with the other models, which are more
constrained.



\begin{theorem}
\label{approx_inc}
With the \incremental model, for any integer $K>0$, the \MinE problem
can be approximated within a factor
$(1+\frac{\delta}{\smin})^2(1+\frac{1}{K})^2$, in a time polynomial
in the size of the instance and in~$K$.
\end{theorem}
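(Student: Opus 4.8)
The plan is to reduce the problem to the polynomial-time \VDD model (Theorem~\ref{th.vdd}). The obstacle to applying it directly is that the \incremental model may offer exponentially many speeds: there are $(\smax-\smin)/\delta+1$ admissible speeds, which is not polynomial in the input size when $\delta$ is small. Hence I would first extract a \emph{sparse} subset $R$ of the admissible speeds. Starting from $v_0=\smin$, I build $R=\{v_0<v_1<\dots<v_p=\smax\}$ greedily: given $v_l$, I let $v_{l+1}$ be the largest admissible speed not exceeding $v_l(1+\frac1K)$, and if this would not advance (i.e.\ already $v_l+\delta>v_l(1+\frac1K)$) I take the next admissible speed $v_l+\delta$. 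This guarantees the key \emph{covering property}: every admissible speed is either an element of $R$, or lies strictly between two consecutive elements $v_l<v_{l+1}$ with $v_{l+1}/v_l\le 1+\frac1K$. A counting argument then shows $|R|$ is polynomial in the input and in $K$: the ``forced'' single-step moves occur only while $v_l<K\delta$, of which there are at most $K$, and every other move multiplies the speed by roughly $1+\frac1K$, giving $O\!\left(K\log(\smax/\smin)\right)$ of them.

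The algorithm then (i) solves the \VDD linear program of Theorem~\ref{th.vdd} restricted to the speed set $R$, obtaining an optimal \VDD schedule of energy $E^{(sel)}$ in which each task $T_i$ runs at some effective (average) speed $\bar s_i\in[\smin,\smax]$; and (ii) converts it into a genuine \incremental schedule by executing each $T_i$ at the smallest admissible speed $\sigma_i\ge\bar s_i$. Raising speeds only shortens tasks, so all completion times decrease and the deadline and precedence constraints remain satisfied; the output is feasible.

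For the analysis I would bound the energy by two independent factors. The crucial point is to compare $E^{(sel)}$ to the \emph{incremental} optimum $E^{(inc)}_{opt}$ directly (not to the \continuous optimum), which avoids paying the $\delta$-loss twice. Take an optimal \incremental schedule, running $T_i$ at speed $\sigma_i^\star$, and simulate it with the speeds of $R$: if $\sigma_i^\star\in R$ I keep it (no loss); otherwise I bracket $\sigma_i^\star$ by consecutive $v_l<\sigma_i^\star<v_{l+1}$ with $v_{l+1}/v_l\le 1+\frac1K$ (covering property) and time-share between $v_l,v_{l+1}$ over exactly the original execution window of $T_i$. A short computation using $v_l\le\sigma_i^\star$ and $v_{l+1}\le(1+\frac1K)\sigma_i^\star$ shows the time-shared energy of that task is at most $(1+\frac1K)^2 w_i(\sigma_i^\star)^2$; summing gives $E^{(sel)}\le(1+\frac1K)^2 E^{(inc)}_{opt}$. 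For the conversion step, since $\bar s_i\ge\smin$ I have $\sigma_i\le\bar s_i+\delta\le(1+\frac{\delta}{\smin})\bar s_i$, hence $\sum_i w_i\sigma_i^2\le(1+\frac{\delta}{\smin})^2\sum_i w_i\bar s_i^2$; and by convexity of $x\mapsto x^3$ (as used in Lemma~\ref{lem.prel}) running at the constant effective speed costs no more than time-sharing, i.e.\ $\sum_i w_i\bar s_i^2\le E^{(sel)}$. Chaining the two bounds yields the ratio $(1+\frac{\delta}{\smin})^2(1+\frac1K)^2$, and all steps run in time polynomial in the instance and in $K$.

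The part I expect to require the most care is the construction and justification of $R$: I must simultaneously guarantee that $|R|$ stays polynomial \emph{and} that the covering property holds even in the regime where the grid is coarse relative to $1/K$ (i.e.\ $\delta>v_l/K$, near $\smin$), where consecutive admissible speeds already differ by more than a factor $1+\frac1K$. The resolution is that in exactly this regime $R$ retains \emph{all} admissible speeds, so each such speed is itself selected and incurs no simulation loss; this is precisely what keeps the two factors separated instead of compounding into $(1+\frac{\delta}{\smin})^4$.
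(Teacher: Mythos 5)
Your proposal is correct and reaches the stated ratio by the same overall strategy as the paper: build a polynomial-size set of (nearly) geometrically spaced speeds, solve the resulting \VDD instance with the linear program of Theorem~\ref{th.vdd}, round each task's average speed up to the next admissible speed of the \incremental instance, and split the loss into the rounding factor $(1+\frac{\delta}{\smin})^2$ and the discretization factor $(1+\frac{1}{K})^2$. The two arguments differ in how each factor is certified. The paper takes the auxiliary speeds to be exactly $\smin(1+\frac1K)^i$ for $0\le i\le N$ --- these need not be admissible in the \incremental model, which is harmless since they only define the auxiliary \VDD instance --- and bounds the \VDD optimum by rounding the optimal \continuous speeds up to this grid, finishing with $E_{con}\le E_{inc}$; the covering property is then automatic and no case analysis or counting argument is needed. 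You instead require the auxiliary set $R$ to consist of admissible speeds, which costs you the greedy construction, the two-regime count of $|R|$, and the coarse-grid case near $\smin$, but lets you compare directly to the \incremental optimum by time-sharing between consecutive elements of $R$ over each task's original execution window (the estimate $\alpha v_{l+1}^3+\beta v_l^3\le v_{l+1}^2(\alpha v_{l+1}+\beta v_l)=v_{l+1}^2 w_i$ gives exactly the per-task bound you claim). Both routes are valid and give the same constant; the paper's detour through the continuous relaxation is shorter, while yours stays entirely within the discrete models and makes explicit why the two losses do not compound into $(1+\frac{\delta}{\smin})^4$. One detail to state carefully in your counting: an unforced move at $v_l$ only guarantees $v_{l+1}\ge\max\bigl(v_l+\delta,\;v_l(1+\tfrac1K)-\delta\bigr)\ge v_l(1+\tfrac{1}{2K})$, so the geometric moves should be counted with ratio $1+\tfrac{1}{2K}$ rather than $1+\tfrac1K$; this still yields $|R|=O(K\log(\smax/\smin))$, which is polynomial as required.
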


\begin{proof}
Consider an instance $\II_{inc}$ of the problem with the \incremental
model. The execution graph~$G$ has $n$~tasks, $D$~is the deadline,
$\delta$~is the minimum permissible speed increment,
and $\smin, \smax$ are the speed bounds.
Moreover, let $K>0$ be an integer, and
let $E_{inc}$~be the optimal value of the energy consumption
for this instance~$\II_{inc}$.

We construct the following instance $\II_{vdd}$ with the \VDD model:
the execution graph and the deadline are the same as in
instance~$\II_{inc}$, and the speeds can take the values
$$\left\{\smin\times\left(1+\frac{1}{K}\right)^i\right\}_{0\leq i\leq N}\;,$$
where $N$ is such that $s_{max}$ is not exceeded:
$N=\left\lfloor(\ln(\smax)-\ln(\smin))/\ln\left(1+\frac{1}{K}\right)\right\rfloor$.
As $N$ is asymptotically of order $O(K\ln(\smax))$, the number of
possible speeds in $\II_{vdd}$, and hence the size of
$\II_{vdd}$, is polynomial in the size of $\II_{inc}$ and~$K$.

\medskip
Next, we solve $\II_{vdd}$ in polynomial time thanks to
Theorem~\ref{th.vdd}. For each task~$T_i$, let $s^{(vdd)}_i$ be the
average speed of~$T_i$ in this solution: if the execution time of the
task in the solution is~$d_i$, then $s^{(vdd)}_i = w_i/d_i$;
$E_{vdd}$~is the optimal energy consumption obtained with these speeds.
Let $s^{(algo)}_i=\min_u\{\smin+u\times \delta \; | \; u\times \delta \geq
s^{(vdd)}_i\}$ be the smallest speed in~$\II_{inc}$ which is larger
than~$s^{(vdd)}_i$. There exists such a speed since, because of the
values chosen for~$\II_{vdd}$,  $s^{(vdd)}_i\leq \smax$.
The values $s^{(algo)}_i$ can be computed in time polynomial in
the size of $\II_{inc}$ and~$K$. Let $E_{algo}$ be the energy
consumption obtained with these values.

\medskip
In order to prove that this algorithm is an approximation of the
optimal solution, we need to prove that
$E_{algo} \leq (1+\frac{\delta}{\smin})^2(1+\frac{1}{K})^2\times E_{inc}$.
For each task~$T_i$, $s^{(algo)}_i-\delta\leq s^{(vdd)}_i\leq s^{(algo)}_i$.
Since $\smin \leq s^{(vdd)}_i$, we derive that $s^{(algo)}_i \leq
s^{(vdd)}_i \times (1+\frac{\delta}{\smin})$.
Summing over all
tasks, we get\\
$\text{~~~~~~~~~~~~~~} E_{algo}=\sum_iw_i\left(s^{(algo)}_i\right)^2
  \leq \sum_iw_i\left(s^{(vdd)}_i
    \times(1+\frac{\delta}{\smin})\right)^2
  \leq E_{vdd} \times \left(1+\frac{\delta}{\smin}\right)^2$.\\[.2cm]
Next, we bound $E_{vdd}$ thanks to the optimal solution with the
\continuous model,~$E_{con}$.
Let~$\II_{con}$ be the instance where
the execution graph $G$, the deadline $D$, the speeds $\smin$ and $\smax$
are the same as in instance~$\II_{inc}$, but now admissible speeds take
any value between $\smin$ and~$\smax$.
Let $s^{(con)}_i$ be the optimal continuous speed for task~$T_i$,
and let $0\leq u\leq N$ be the value such that:~\\[.2cm]
$ \text{~~~~~~~~~~~~~~} \smin\times\left(1+\frac{1}{K}\right)^u \leq s^{(con)}_i
  \leq \smin\times\left(1+\frac{1}{K}\right)^{u+1}=s^*_i~$.\\[.2cm]
In order to bound the energy consumption for~$I_{vdd}$, we assume that
$T_i$ runs at speed $s^*_i$, instead of $s^{(vdd)}_i$. The solution
with these speeds is a solution to~$I_{vdd}$, and its energy
consumption is $E^* \geq E_{vdd}$.
From the previous inequalities, we deduce that
$s^*_i \leq s^{(con)}_i \times \left(1+\frac{1}{K}\right)$, and by
summing over all tasks, \\
$E_{vdd} \leq E^* = \sum_i w_i \left(s^*_i\right)^2
  \leq \sum_i w_i \left(s^{(con)}_i \times \left(1+\frac{1}{K}\right)\right)^2
  \leq E_{con} \times \left(1+\frac{1}{K}\right)^2
 \leq E_{inc} \times \left(1+\frac{1}{K}\right)^2\; .
$
%
%
%
\end{proof}

\begin{proposition}${ }$
\begin{compactitem}
\item For any integer $\delta>0$, any instance of \MinE with the
  \continuous model
can be approximated within a factor
$(1+\frac{\delta}{\smin})^2$ in the \incremental model with speed
increment $\delta$.
\item For any integer $K>0$, any instance of \MinE with the \discrete model
can be approximated within a factor
$(1+\frac{\alpha}{s_1})^2(1+\frac{1}{K})^2$, with
$\alpha=\max_{1 \leq i < m}\{s_{i+1}-s_i\}$, in a time polynomial
in the size of the instance and in~$K$.
\end{compactitem}
\end{proposition}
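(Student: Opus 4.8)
The plan is to reuse, almost verbatim, the rounding machinery already developed in the proof of Theorem~\ref{approx_inc}. The first item is exactly the isolated ``round a feasible \continuous solution up onto the admissible grid'' step; the second item is the full three-level chain \continuous${}\to{}$\VDD${}\to{}$\discrete, with the single uniform increment $\delta$ replaced everywhere by the worst-case gap $\alpha=\max_{1\le j<m}\{s_{j+1}-s_j\}$ between consecutive modes, and with $\smin$ replaced by the smallest mode $s_1$.

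For the first item, I would start from an optimal \continuous schedule, with per-task speeds $s^{(con)}_i$ and energy $E_{con}=\sum_i w_i(s^{(con)}_i)^2$. For each task I round its speed up to the smallest admissible \incremental speed $s^{(algo)}_i=\min\{\smin+u\delta : \smin+u\delta\ge s^{(con)}_i\}$; such a value exists because $s^{(con)}_i\le\smax$. Since consecutive \incremental speeds are exactly $\delta$ apart, $s^{(algo)}_i-\delta\le s^{(con)}_i\le s^{(algo)}_i$, and using $\smin\le s^{(con)}_i$ this gives $s^{(algo)}_i\le s^{(con)}_i(1+\tfrac{\delta}{\smin})$. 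The rounded schedule keeps the same completion-time structure and only increases speeds, so the deadline~$D$ and all precedence constraints are still met; summing the per-task energies yields $E_{algo}\le(1+\tfrac{\delta}{\smin})^2 E_{con}$, which is the claimed ratio.

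For the second item, I would follow Theorem~\ref{approx_inc} down to the last rounding. Given a \discrete instance with modes $s_1<\dots<s_m$, I build the \VDD instance on the geometric speed set $\{s_1(1+\tfrac1K)^i\}_{0\le i\le N}$ with $N=\lfloor(\ln s_m-\ln s_1)/\ln(1+\tfrac1K)\rfloor=O(K\ln s_m)$, which is polynomial, and solve it in polynomial time by Theorem~\ref{th.vdd}, obtaining average speeds $s^{(vdd)}_i$ and energy $E_{vdd}$. The bound $E_{vdd}\le(1+\tfrac1K)^2 E_{con}$ carries over unchanged, by placing each optimal \continuous speed between two consecutive geometric grid points (the smallest being $s_1$). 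The only genuinely new step is the final rounding: round each $s^{(vdd)}_i$ up to the smallest mode $s^{(algo)}_i\ge s^{(vdd)}_i$, which exists since the grid, and hence the \VDD averages, never exceed $s_m=\smax$. Because two consecutive modes differ by at most $\alpha$, we get $s^{(algo)}_i-\alpha\le s^{(vdd)}_i$, and with $s_1\le s^{(vdd)}_i$ this yields $s^{(algo)}_i\le s^{(vdd)}_i(1+\tfrac{\alpha}{s_1})$. Chaining the three inequalities and using $E_{con}\le E_{disc}$ produces a feasible \discrete schedule with $E_{algo}\le(1+\tfrac{\alpha}{s_1})^2(1+\tfrac1K)^2 E_{con}\le(1+\tfrac{\alpha}{s_1})^2(1+\tfrac1K)^2 E_{disc}$, computed in time polynomial in the instance size and in~$K$.

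The arithmetic is identical to that already carried out for Theorem~\ref{approx_inc}, so the only point that needs care — and the one real obstacle — is the passage from a uniform increment to irregularly spaced modes. In the \incremental case every rounding error is bounded by the single value $\delta$, whereas in the \discrete case the error depends on which interval $[s_j,s_{j+1}]$ contains $s^{(vdd)}_i$, so it must be bounded uniformly by $\alpha$; this is exactly what degrades the factor from $\delta/\smin$ to $\alpha/s_1$. I would also flag two feasibility checks that the argument silently relies on: that the rounded-up speed never exceeds $\smax$ (guaranteed because both the geometric grid and the \VDD averages stay at most $\smax$), and that the step $s^{(con)}_i\ge\smin$ (resp. $s^{(vdd)}_i\ge s_1$) holds, which is inherited from the same reasoning used in Theorem~\ref{approx_inc} and is what lets the per-task ratio be controlled by $1+\delta/\smin$ (resp. $1+\alpha/s_1$). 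Since speeds only increase under rounding, task durations only shrink, so feasibility with respect to~$D$ is automatic.
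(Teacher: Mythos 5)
Your proposal is correct and follows essentially the same route as the paper: round the optimal \continuous speeds up onto the \incremental grid for the first item, and for the second item rerun the \continuous${}\to{}$\VDD${}\to{}$rounding chain of Theorem~\ref{approx_inc} with $\alpha$ and $s_1$ in place of $\delta$ and $\smin$. In fact your write-up is somewhat more careful than the paper's own (very terse) proof, whose stated inequalities for the first item have their directions garbled; your version states the intended bounds $s^{(algo)}_i\le s^{(con)}_i(1+\delta/\smin)$ and $E_{inc}\le E_{algo}\le(1+\delta/\smin)^2E_{con}$ correctly and makes the feasibility checks explicit.
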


\begin{proof}
For the first part, let $s^{(con)}_i$ be the optimal continuous speed
for task~$T_i$ in instance $\II_{con}$;
$E_{con}$ is the optimal energy consumption.
For any task $T_i$, let $s_i$ be the speed of $\II_{inc}$ such that
$s_i-\delta<s^{con}_i\leq s_i$.
Then, $s^{(con)}_i\leq s_i\times\left(1+\frac{\delta}{\smin}\right)$.
Let $E$ be the energy with speeds $s_i$.
$E_{con}\leq E\times \left(1+\frac{\delta}{\smin}\right)^2$.
Let $E_{inc}$ be the optimal energy of $\II_{inc}$. Then,
$E_{con}\leq E_{inc}\times
\left(1+\frac{\delta}{\smin}\right)^2$.
\smallskip

For the second part, we use the same algorithm as in Theorem \ref{approx_inc}.
The same proof leads to the approximation ratio
with $\alpha$ instead of $\delta$.
\end{proof}

\section{Conclusion}
\label{sec.conclusion}

In this paper, we have assessed the tractability of a classical
scheduling problem, with task preallocation, under various energy
models. We have given several results related to \continuous
speeds. However, while these are of conceptual importance, they cannot
be achieved with physical devices, and we have analyzed several models
enforcing a bounded number of achievable speeds, a.k.a. modes.  In the
classical \discrete model that arises from DVFS techniques, admissible
speeds can be irregularly distributed, which motivates the \VDD
approach that mixes two consecutive modes optimally. While computing
optimal speeds is NP-hard with discrete modes, it has polynomial
complexity when mixing speeds. Intuitively, the \VDD approach allows
for smoothing out the discrete nature of the modes.  An alternate (and
simpler in practice) solution to \VDD is the \incremental model, where
one sticks with unique speeds during task execution as in the
\discrete model, but where consecutive modes are regularly
spaced. Such a model can be made arbitrarily efficient, according to
our approximation results.

Altogether, this paper has laid the theoretical foundations for a
comparative study of energy models. In the recent years, we have
observed an increased concern for green computing, and a rapidly
growing number of approaches. It will be very interesting to see which
energy-saving technological solutions will be implemented in
forthcoming future processor chips!

\clearpage
\bibliographystyle{abbrv}
\bibliography{biblio}

\begin{thebibliography}{10}

\bibitem{amd}
{AMD processors}.
\newblock http://www.amd.com.

\bibitem{pow3IPDPS}
H.~Aydin and Q.~Yang.
\newblock Energy-aware partitioning for multiprocessor real-time systems.
\newblock In {\em Proceedings of the International Parallel and Distributed
  Processing Symposium (IPDPS)}, pages 113--121. IEEE CS Press, 2003.

\bibitem{BKP07}
N.~Bansal, T.~Kimbrel, and K.~Pruhs.
\newblock Speed scaling to manage energy and temperature.
\newblock {\em Journal of the ACM}, 54(1):1 -- 39, 2007.

\bibitem{Beigne2008b}
E.~Beigne, F.~Clermidy, J.~Durupt, H.~Lhermet, S.~Miermont, Y.~Thonnart,
  T.~Xuan, A.~Valentian, D.~Varreau, and P.~Vivet.
\newblock {An asynchronous power aware and adaptive NoC based circuit}.
\newblock In {\em Proceedings of the 2008 IEEE Symposium on VLSI Circuits},
  pages 190--191, June 2008.

\bibitem{Beigne2008}
E.~Beigne, F.~Clermidy, S.~Miermont, Y.~Thonnart, A.~Valentian, and P.~Vivet.
\newblock {A Localized Power Control mixing hopping and Super Cut-Off
  techniques within a GALS NoC}.
\newblock In {\em Proceedings of ICICDT 2008, the IEEE International Conference
  on Integrated Circuit Design and Technology and Tutorial}, pages 37--42, June
  2008.

\bibitem{Boyd2004}
S.~Boyd and L.~Vandenberghe.
\newblock {\em Convex Optimization}.
\newblock Cambridge University Press, 2004.

\bibitem{pow3}
A.~P. Chandrakasan and A.~Sinha.
\newblock {JouleTrack: A Web Based Tool for Software Energy Profiling}.
\newblock In {\em Design Automation Conference}, pages 220--225, Los Alamitos,
  CA, USA, 2001. IEEE Computer Society Press.

\bibitem{Raghavan2005}
G.~Chen, K.~Malkowski, M.~Kandemir, and P.~Raghavan.
\newblock Reducing power with performance constraints for parallel sparse
  applications.
\newblock In {\em Proceedings of IPDPS 2005, the 19th IEEE International
  Parallel and Distributed Processing Symposium}, page 8 pp., Apr. 2005.

\bibitem{ChenKuo07}
J.-J. Chen and C.-F. Kuo.
\newblock {Energy-Efficient Scheduling for Real-Time Systems on Dynamic Voltage
  Scaling (DVS) Platforms}.
\newblock In {\em Proceedings of the International Workshop on Real-Time
  Computing Systems and Applications}, pages 28--38, Los Alamitos, CA, USA,
  2007. IEEE Computer Society.

\bibitem{pow3ICPP}
J.-J. Chen and T.-W. Kuo.
\newblock Multiprocessor energy-efficient scheduling for real-time tasks.
\newblock In {\em Proceedings of International Conference on Parallel
  Processing (ICPP)}, pages 13--20. IEEE CS Press, 2005.

\bibitem{GareyJohnson}
M.~R. Garey and D.~S. Johnson.
\newblock {\em Computers and Intractability; A Guide to the Theory of
  NP-Completeness}.
\newblock W. H. Freeman \& Co., New York, NY, USA, 1990.

\bibitem{1105799}
R.~Ge, X.~Feng, and K.~W. Cameron.
\newblock Performance-constrained distributed {DVS} scheduling for scientific
  applications on power-aware clusters.
\newblock In {\em Proceedings of the ACM/IEEE conference on SuperComputing
  (SC)}, page~34. IEEE Computer Society, 2005.

\bibitem{Gonzalez1996}
R.~Gonzalez and M.~Horowitz.
\newblock Energy dissipation in general purpose microprocessors.
\newblock {\em IEEE Journal of Solid-State Circuits}, 31(9):1277 --1284, Sept.
  1996.

\bibitem{Grosse2009}
P.~Grosse, Y.~Durand, and P.~Feautrier.
\newblock {Methods for power optimization in SOC-based data flow systems}.
\newblock {\em ACM Trans. Des. Autom. Electron. Syst.}, 14:38:1--38:20, June
  2009.

\bibitem{10.1109/IPDPS.2006.1639597}
Y.~Hotta, M.~Sato, H.~Kimura, S.~Matsuoka, T.~Boku, and D.~Takahashi.
\newblock Profile-based optimization of power performance by using dynamic
  voltage scaling on a pc cluster.
\newblock In {\em Proceedings of the International Parallel and Distributed
  Processing Symposium (IPDPS)}, page 340, Los Alamitos, CA, USA, 2006. IEEE
  Computer Society Press.

\bibitem{intel}
{Intel XScale technology}.
\newblock http://www.intel.com/design/intelxscale.

\bibitem{280894}
T.~Ishihara and H.~Yasuura.
\newblock Voltage scheduling problem for dynamically variable voltage
  processors.
\newblock In {\em Proceedings of International Symposium on Low Power
  Electronics and Design (ISLPED)}, pages 197--202. ACM Press, 1998.

\bibitem{JPG04}
R.~Jejurikar, C.~Pereira, and R.~Gupta.
\newblock Leakage aware dynamic voltage scaling for real-time embedded systems.
\newblock In {\em Proceedings of DAC'04, the 41st annual Design Automation
  Conferencea}, pages 275--280, New York, NY, USA, 2004. ACM.

\bibitem{Kawaguchi2001}
H.~Kawaguchi, G.~Zhang, S.~Lee, and T.~Sakurai.
\newblock {An LSI for VDD-Hopping and MPEG4 System Based on the Chip}.
\newblock In {\em Proceedings of ISCAS'2001, the International Symposium on
  Circuits and Systems}, May 2001.

\bibitem{Buyya2007}
K.~H. Kim, R.~Buyya, and J.~Kim.
\newblock {Power Aware Scheduling of Bag-of-Tasks Applications with Deadline
  Constraints on DVS-enabled Clusters}.
\newblock In {\em Proceedings of CCGRID 2007, the 7th IEEE International
  Symposium on Cluster Computing and the Grid}, pages 541 --548, May 2007.

\bibitem{Lahiri2002}
K.~Lahiri, A.~Raghunathan, S.~Dey, and D.~Panigrahi.
\newblock Battery-driven system design: a new frontier in low power design.
\newblock In {\em Proceedings of ASP-DAC 2002, the 7th Asia and South Pacific
  Design Automation Conference and the 15th International Conference on VLSI
  Design}, pages 261 --267, 2002.

\bibitem{juurLan09}
P.~Langen and B.~Juurlink.
\newblock Leakage-aware multiprocessor scheduling.
\newblock {\em J. Signal Process. Syst.}, 57(1):73--88, 2009.

\bibitem{Lee2000}
S.~Lee and T.~Sakurai.
\newblock Run-time voltage hopping for low-power real-time systems.
\newblock In {\em Proceedings of DAC'2000, the 37th Conference on Design
  Automation}, pages 806--809, 2000.

\bibitem{Miermont2007}
S.~Miermont, P.~Vivet, and M.~Renaudin.
\newblock {A Power Supply Selector for Energy- and Area-Efficient Local Dynamic
  Voltage Scaling}.
\newblock In N.~Azémard and L.~Svensson, editors, {\em Integrated Circuit and
  System Design. Power and Timing Modeling, Optimization and Simulation},
  volume 4644 of {\em Lecture Notes in Computer Science}, pages 556--565.
  Springer Berlin / Heidelberg, 2007.

\bibitem{greenMills}
M.~P. Mills.
\newblock The internet begins with coal.
\newblock {\em Environment and Climate News}, page~., 1999.

\bibitem{Okuma2001}
T.~Okuma, H.~Yasuura, and T.~Ishihara.
\newblock Software energy reduction techniques for variable-voltage processors.
\newblock {\em Design Test of Computers, IEEE}, 18(2):31 --41, Mar. 2001.

\bibitem{Prathipati2004}
R.~B. Prathipati.
\newblock Energy efficient scheduling techniques for real-time embedded
  systems.
\newblock Master's thesis, Texas A\&M University, May 2004.

\bibitem{pruhsTCS}
K.~Pruhs, R.~van Stee, and P.~Uthaisombut.
\newblock Speed scaling of tasks with precedence constraints.
\newblock {\em Theory of Computing Systems}, 43:67--80, 2008.

\bibitem{Rayward95}
V.~J. Rayward-Smith, F.~W. Burton, and G.~J. Janacek.
\newblock Scheduling parallel programs assuming preallocation.
\newblock In P.~Chr\'etienne, E.~G. {Coffman Jr.}, J.~K. Lenstra, and Z.~Liu,
  editors, {\em Scheduling Theory and its Applications}. John Wiley and Sons,
  1995.

\bibitem{SchrijverComb}
A.~Schrijver.
\newblock {\em Combinatorial Optimization: Polyhedra and Efficiency}, volume~24
  of {\em Algorithms and Combinatorics}.
\newblock Springer-Verlag, 2003.

\bibitem{980157}
K.~Skadron, M.~R. Stan, K.~Sankaranarayanan, W.~Huang, S.~Velusamy, and
  D.~Tarjan.
\newblock Temperature-aware microarchitecture: modeling and implementation.
\newblock {\em ACM Transactions on Architecture and Code Optimization},
  1(1):94--125, 2004.

\bibitem{Wang2010}
L.~Wang, G.~von Laszewski, J.~Dayal, and F.~Wang.
\newblock {Towards Energy Aware Scheduling for Precedence Constrained Parallel
  Tasks in a Cluster with DVFS}.
\newblock In {\em Proceedings of CCGrid'2010, the 10th IEEE/ACM International
  Conference on Cluster, Cloud and Grid Computing}, pages 368 --377, May 2010.

\bibitem{XuBig}
R.~Xu, D.~Moss\'{e}, and R.~Melhem.
\newblock Minimizing expected energy consumption in real-time systems through
  dynamic voltage scaling.
\newblock {\em ACM Trans. Comput. Syst.}, 25(4):9, 2007.

\bibitem{Yang2009}
L.~Yang and L.~Man.
\newblock {On-Line and Off-Line DVS for Fixed Priority with Preemption
  Threshold Scheduling}.
\newblock In {\em Proceedings of ICESS'09, the International Conference on
  Embedded Software and Systems}, pages 273 --280, May 2009.

\bibitem{YDS95}
F.~Yao, A.~Demers, and S.~Shenker.
\newblock {A scheduling model for reduced CPU energy}.
\newblock In {\em Proceedings of FOCS '95, the 36th Annual Symposium on
  Foundations of Computer Science}, page 374, Washington, DC, USA, 1995. IEEE
  Computer Society.

\bibitem{ZHC02}
Y.~Zhang, X.~S. Hu, and D.~Z. Chen.
\newblock Task scheduling and voltage selection for energy minimization.
\newblock In {\em Proceedings of DAC'02, the 39th annual Design Automation
  Conference}, pages 183--188, New York, NY, USA, 2002. ACM.

\end{thebibliography}

\end{document}